\title{Compressing continuous variable quantum measurements}
\numberwithin{equation}{section} 
\theoremstyle{definition}
\newtheorem{proposition}{Proposition}
\newtheorem{definition}{Definition}
\newtheorem{example}{Example}
\newtheorem{theorem}{Theorem}
\newtheorem{lemma}{Lemma}
\newcommand{\hA}{\mathcal{A}}
\newcommand{\hB}{\mathcal{B}}
\newcommand{\hD}{\mathcal{D}}
\newcommand{\hE}{\mathcal{E}}
\newcommand{\hL}{\mathcal{L}}
\newcommand{\hM}{\mathcal{M}}
\newcommand{\hP}{\mathcal{P}} 
\newcommand{\hS}{\mathcal{S}}
\newcommand{\hT}{\mathcal{T}}
\newcommand{\N}{\mathbb N} 
\newcommand{\R}{\mathbb R} 
\newcommand{\C}{\mathbb C} 
\newcommand{\fii}{\varphi} 
\newcommand{\hi}{\mathcal{H}} 
\newcommand{\ki}{\mathcal{K}} 
\newcommand{\li}{\mathcal{L}} 
\newcommand{\id}{\mathds1} 
\newcommand{\lh}{\mathcal{L(H)}} 
\renewcommand{\th}{\mathcal{T(H)}} 
\newcommand{\tk}{\mathcal{T(K)}} 
\newcommand{\sh}{\mathcal{S(H)}} 
\newcommand{\tr}[1]{\mathrm{tr}\left[#1\right]} 
\def\<{\langle} 
\def\>{\rangle} 
\newcommand{\ket}[1]{|#1\rangle} 
\newcommand{\bra}[1]{\langle #1|} 
\newcommand{\kb}[2]{|#1 \rangle\langle #2|} 
\newcommand{\ip}[2]{\left\langle #1 | #2 \right\rangle} 
\newcommand{\bo}[1]{\mathcal{B}\left(#1\right)} 
\newcommand{\lin}{\mathrm{lin}\,} 
\newcommand{\CHI}[1]{\ensuremath{ \chi\raisebox{-1ex}{$\scriptstyle #1$} }} 
\begin{document}

\author[1]{Pauli Jokinen}
\author[1]{Sophie Egelhaaf}
\author[2]{Juha-Pekka Pellonpää}
\author[1]{Roope Uola}
\affil[1]{Department of Applied Physics, University of Geneva, 1211 Geneva, Switzerland}
\affil[2]{Department of Physics and Astronomy, University of Turku, FI-20014 Turun yliopisto, Finland}

\maketitle

\begin{abstract}
    We generalize the notion of joint measurability to continuous variable systems by extending a recently introduced compression algorithm of quantum measurements to this realm. The extension results in a property that asks for the minimal dimensional quantum system required for representing a given set of quantum measurements. To illustrate the concept, we show that the canonical pair of position and momentum is completely incompressible. We translate the concept of measurement compression to the realm of quantum correlations, where it results in a generalisation of continuous variable quantum steering. In contrast to the steering scenario, which detects entanglement, the generalisation detects the dimensionality of entanglement. We illustrate the bridge between the concepts by showing that an analogue of the original EPR argument is genuinely infinite-dimensional with respect to our figure of merit, and that a fundamental discrete variable result on preparability of unsteerable state assemblages with separable states does not directly carry over to the continuous variable setting. We further prove a representation result for partially entanglement breaking channels that can be of independent interest.
\end{abstract}

\section{Introduction}

A key concept in quantum measurement theory is that of joint measurability. This generalization of commutativity is a central tool in the foundations of quantum theory, as shown by, e.g., the resolution of the Heisenberg microscope \cite{buschrmp2014}, Fine’s theorem \cite{fine82}, and connections to fundamental bounds in interferometry \cite{Kiukas22coherence}. On the applied side, the research on the topic has been recently fueled by bridges between incompatibility of measurements and the advantage provided in various quantum correlation tasks, such as Bell non-locality \cite{wolf09,brunner14}, EPR steering \cite{wiseman07,cavalcanti17,uola20review,quintino14,uola14,uola15}, contextuality \cite{tavakoli19,xu19}, and temporal correlations \cite{uola19a,uola2022retrievability}, as well as in operational tasks such as quantum state discrimination \cite{carmeli19a,skrzypczyk19,oszmaniec19,uola19b,uola19c}, and programmability \cite{buscemi20}, see \cite{Guhne2023JMreview} for a review.

Remarkably, many of the recent results and directions have concentrated on the finite-dimensional and especially on the discrete variable regime. In this manuscript, we take one of such operational directions, that of compressibility, or $n$-simulability, and present a proper generalization of it to the continuous variable realm. This notion asks whether one could simulate a given set of measurements by first distributing a quantum state of a $d$-dimensional system into smaller $n$-dimensional systems and then performing measurements on these smaller systems, cf.\  Fig.~\ref{fig:CompressionDiscreteFinite}. In the case $n=1$, one gets the notion of joint measurability. This is due to the fact, that the distribution of the state becomes a quantum-to-classical map, i.e.\ a measurement, and the measurements on the smaller one-dimensional systems are simply classical data processings.

\begin{figure}
    \centering
    \includegraphics[width= 0.9\linewidth, clip]{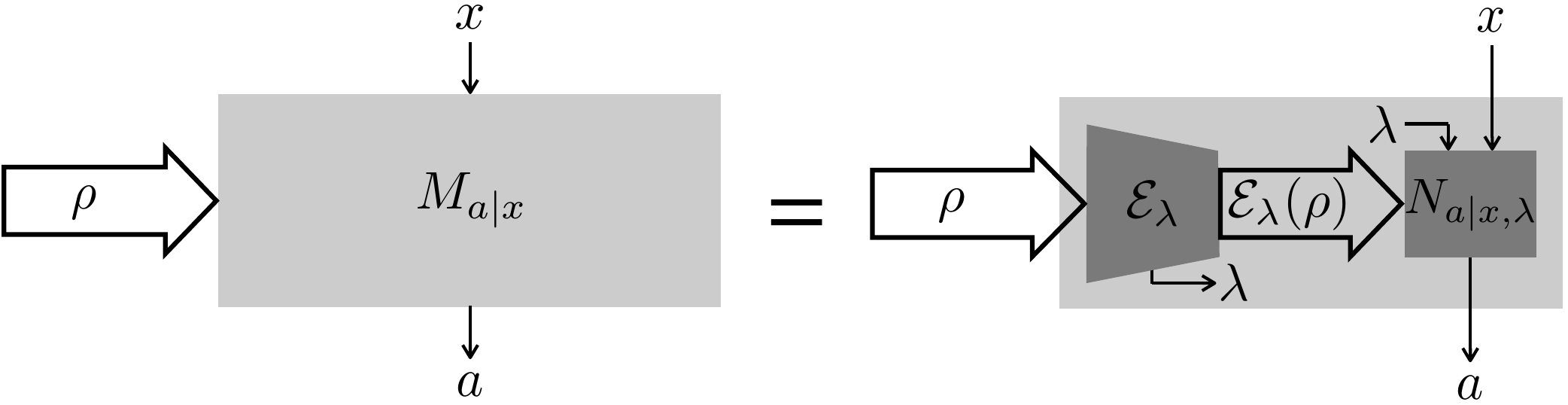}
    \caption{If a set of measurements $\{M_{a|x}\}_{a,x}$ is $n$-compressible, its statistics $p(a|x)$ can be simulated for any state $\rho \in \hS\left(\C^d\right)$ by first applying an instrument $\hE_\lambda$ which returns a set of sub-normalised states $\hE_\lambda(\rho) \in \hT\left(\C^n\right)$ labeled by $\lambda$ and then acting on these states with a set of measurements $\{N_{a|x,\lambda}\}_{a,x,\lambda}$.}
    \label{fig:CompressionDiscreteFinite}
\end{figure}

We show that in order to do the generalisation in line with the finite-dimensional compression scheme, one needs to take the non-trivial step of changing from discrete Kraus decompositions of quantum channels into continuous or point-wise decompositions. This further requires one to change from traditional bounded Kraus operators to unbounded ones. Notably, this provides an operational application of Heisenberg instruments whose output space is not the full quantum algebra. The process also involves proving a structural result on partially entanglement breaking channels, that is characterizing such maps with their pointwise Kraus decompositions. After arriving at a definition that is consistent with joint measurability in the limiting $n=1$ case, we show through an example that position and momentum are not finitely compressible.

We further generalize a recently found connection between $n$-simulability and high-dimensional quantum steering \cite{jones2022equivalence}. In the finite-dimensional case, such connection has been fruitful, in that it has been used to introduce entanglement-assisted local models \cite{jones2022equivalence} and it has shown a deep connection between $n$-simulability and recent demonstrations of high-dimensional entanglement in the semi-device independent regime \cite{designolle21}.  We present a general definition of high-dimensional steering, or $n$-preparability, which is consistent with finite dimensional case. Furthermore, we show that the fundamental result stating that any unsteerable state assemblage is preparable with some separable state \cite{kogias15b,moroder16} has to be treated carefully when applied to infinite dimensions.
Moreover, we use the result on non-compressibility of position and momentum to show that an analogue of the original EPR argument is not $n$-preparable for any finite $n$.

The paper is structured as follows. In the second section, we present the problem of measurement compression in the finite-dimensional case using Kraus decompositions of quantum channels. In the third section, we propose a definition for compressibility in the continuous variable setting. We show that in order to preserve the finite-dimensional operational structure, one needs to consider a more general formulation consisting of pointwise (unbounded) Kraus representations or, equivalently, direct integral Hilbert spaces. This section also includes some structural results on $n$-PEB channels. Here, we further show that position and momentum are non-finitely compressible. In section four, we build a bridge between $n$-simulability and high-dimensional steering for the general case and show how our results on measurement compression can be translated to the shared scenario.

\section{Simulating finite-dimensional quantum measurements}

We begin by describing the measurement compression scenario in the finite dimensional framework originally introduced in \cite{ioannou2022simulability}. Suppose we have a set of quantum measurements $\hM=\{M_{a|x}\}_{a,x}$ in a $d$-dimensional Hilbert space, where each $\{M_{a|x}\}_a$ is a POVM (normalized positive operator-valued measure) i.e.\ a set of positive semi-definite linear operators with the normalization condition $\sum_{a} M_{a|x}=\id$. Throughout this article we let $\id$ denote the identity operator of any linear space. This set of measurements produces probabilities on an arbitrary quantum state $\rho$, i.e a linear operator for which $\rho\geq0$  and $\tr{\rho}=1$, via the Born rule $p_{\rho}(a|x)=\tr{M_{a|x}\rho}$. We denote the state space (resp.\ trace-class) of a Hilbert space $\hi$ by $\sh$ (resp.\ $\th$). 
In addition, $\li(\hi)$ denotes the bounded linear operators on $\hi$.
If $\hi$ is finite dimensional, we usually identify $\hi$ with $\C^d$ where $d=\dim\hi$. In this case, $\mathcal T(\C^d)=\mathcal L(\C^d)$.

In the measurement compression scenario we look to recover the exact statistics of $\hM$ with POVMs operating in $n$-dimensional (or smaller) Hilbert space with $n<d$. Formally the concept of compressibility is defined by utilizing a quantum instrument $\{\hE_{\lambda}\}_{\lambda}$, i.e.\ a set of completely positive trace non-increasing maps $\hE_{\lambda}:\mathcal T(\C^d) \to \mathcal T(\C^n)$ such that $\sum_{\lambda} \hE_{\lambda}$ is trace-preserving.  Specifically, a set of measurements $\hM=\{M_{a|x}\}_{a,x} \subseteq \li(\C^d)$ is \emph{$n$-simulable} if and only if there exist an instrument $\{\hE_{\lambda}\}_{\lambda}$ and a set of POVMs $\{N_{a|x,\lambda}\}_{a,x,\lambda} \subseteq \li(\C^n)$ such that the statistics of $\hM$ in an arbitrary state $\rho\in\mathcal S(\C^d)$ can be recovered from the $n$-dimensional POVMs mediated by the instrument:
\begin{align}\label{Eq:simfinite}
    \tr{\rho M_{a|x}}=\sum_{\lambda} \tr{\hE_{\lambda}(\rho) N_{a|x,\lambda}}.
\end{align}
This equation implies that $\hM$ is 1-simulable if and only if it is jointly measurable, i.e.\ $M_{a|x}=\sum_{\lambda} p(a|x,\lambda) G_{\lambda}$ for all $a,\,x$, where $\{G_{\lambda}\}_{\lambda}$ is some ``parent'' POVM. This is one of the equivalent definitions of joint measurability for discrete-outcome measurements. The equivalence with joint measurability is also intuitively plausible, since in the case $n=1$ the full quantum information is compressed to purely classical. Therefore the concept of $n$-simulability is a relaxation of the concept of joint measurability.

The definition of $n$-simulability is known \cite[Claim 5]{ioannou2022simulability} to be related to \emph{n-partially entanglement breaking} channels ($n$-PEB), i.e.\ completely positive trace preserving (CPTP) maps $\Lambda$ such that, for a Hilbert space $\hi$ and for an arbitrary $\rho \in \hS(\C^d \otimes \hi)$, we have  $\mathrm{SN}[\Lambda \otimes \id_{\hi}(\rho)]\leq n$. Here $\mathrm{SN}(\rho)=\inf_{\sum_i p_i \kb{\psi_i}{\psi_i}=\rho} \sup_i \mathrm{SR}(\kb{\psi_i}{\psi_i})$ is the Schmidt number, where $\mathrm{SR}$ is the Schmidt rank of a pure state, i.e.\ the dimension of the support of either subsystem's state. Specifically the $n$-simulable sets of POVMs are exactly those for which there exist a $n$-PEB channel $\Lambda: \li(\C^d) \to \li(\hi)$ and a set of POVMs $\{N_{a|x}\}_{a,x} \subseteq \li(\hi)$ such that $M_{a|x}=\Lambda^*(N_{a|x})$ for all $a$ and $x$. Here $\Lambda^*$ denotes the Heisenberg picture of the Schrödinger channel $\Lambda$. In the finite dimensional case, a $n$-PEB channel $\Lambda$ is known to possess a rank-restricted Kraus decomposition, i.e.\ $\Lambda(\rho)=\sum_{\lambda} K_{\lambda} \rho K_{\lambda}^*$ with $\mathrm{rank}(K_{\lambda})\leq n$ for all $\lambda$ \cite{Chruscinski2006}. Therefore the finite-dimensional quantum framework allows us to define the concept of $n$-simulability essentially via the Kraus decomposition and the ranks of the Kraus operators of the related channel.

\section{Continuous variable compression}

In this section we aim to generalize the concept of simulability to continuous-variable measurements operating in infinite-dimensional \emph{separable} Hilbert spaces by using Kraus-type representations. A separable Hilbert space is a Hilbert space containing a countable dense set (or a countable basis), and going forward we will assume that all Hilbert spaces are of this type unless otherwise stated. As it will turn out, the generalization of simulability into the continuous variable case requires richer structure than the finite case to be consistent. \par 

As mentioned, for a consistent generalization, 1-simulable sets of measurements should be equivalent to jointly measurable ones. A quantum measurement $M$ in the continuous variable setting is also described as a POVM. Specifically, letting $\Omega$ be the space of outcomes of the measurement and $\hA$ be the set of possible outcome subsets of $\Omega$, i.e.\ a suitable $\sigma$-algebra of $\Omega$, a POVM is a map $M:\hA \to \li(\hi)$ that is $\sigma$-additive in the weak operator topology, $M(X)\geq 0$ for all $X \in \hA$ and $M(\Omega)=\id$. For a discrete $\sigma$-algebra, this reduces to the usual definition of a discrete POVM. For the rest of the paper we assume the measurable spaces to be \emph{standard Borel} i.e.\ Polish spaces equipped with their Borel $\sigma$-algebras (this can be slightly generalized, see \cite{Pello7}). Joint measurability is then defined as a straightforward generalization of the definition in the discrete setting: A set of measurements $\{M_x\}_x$ is jointly measurable if and only if there exists a ``parent'' POVM $G$ defined in the measurable space $(\Omega,\hA)$ and a post-processings, i.e.\ \emph{Markov kernels} $p(\cdot|x,\cdot):\hA_x \times \Omega \to [0,1]$ such that, for all $x$, 
\begin{align}
    M_x(X_x)=\int_{\Omega} p(X_x|x,\lambda) \, dG(\lambda) \qquad\text{(weakly)}
\end{align}
for all $X_x \in \hA_x$.\footnote{Clearly, the maps $\lambda\mapsto p(X_x|x,\lambda)$ must be $G$-measurable.} 
Here the $(\Omega_x,\hA_x)$ describe the outcomes for the setting $x$. Equivalently joint measurability can be defined as the existence of a marginal form joint observable \cite[Theorem 11.1]{busch16}.  For an in-depth look about continuous-variable POVMs and operator integrals, see \cite{busch16}. \par 
We now ask whether the definitions for simulability of the discrete measurements acting on finite dimensional Hilbert spaces can be generalized while being consistent with joint measurability. We begin by tackling the channel dependent definition. Let $\hM=\{M_x\}_x$ be a set of measurements. Does it hold that $\hM$ is jointly measurable if and only if there is an entanglement breaking (1-PEB) channel  $\Lambda: \th \to \tk$ and a set of POVMs $\{N_x\}_x$ acting on $\ki$, where $\ki$ is another Hilbert space, such that $\Lambda^*(N_x(X_x))=M_x(X_x)$ for all $X_x \in \hA_x$ and $x$? \par 

To do so, we first need to consider $n$-PEB channels in infinite dimensions a bit more carefully. One needs to use the generalization of the finite-dimensional Schmidt number introduced in \cite{shirokov2013schmidt}. This is defined as follows: Let $\omega \in \hS(\hi \otimes \ki)$ and let $\hP_{\omega}(\mathrm{extr}(\hS(\hi \otimes \ki))$ be the set of probability measures defined in the Borel $\sigma$-algebra of the extremals (i.e.\ pure states) of $\hS(\hi \otimes \ki)$ with the topology induced by the trace norm such that for all $\mu \in \hP_{\omega}(\mathrm{extr}(\hS(\hi \otimes \ki))$ we have $\omega=\int_{\mathrm{extr}(\hS(\hi \otimes \ki))} \rho \, d\mu(\rho)$. In this case $\omega$ is called a \emph{barycenter} of $\mu$. Then the Schmidt number is defined as  $\mathrm{SN}(\omega):=\inf_{\mu \in \hP_{\omega}(\mathrm{extr}(\hS(\hi \otimes \ki))} \mu\text{-ess} \sup \{\mathrm{SR}(\rho) \,|\,\omega=\int\rho \, d\mu(\rho)\}$.  The need for this generalization draws its roots from the existence of non-countably decomposable separable states \cite{holevo2005separability}. The $n$-PEB property of a channel $\Lambda$ is then defined as the requirement that $\mathrm{SN}(\Lambda \otimes \id (\rho)) \leq n$ for an arbitrary state $\rho$. \par 

The ``if''-direction, i.e.\ EBC-preparable $\Rightarrow$ jointly measurable, follows by a direct calculation from Theorem 2 of \cite{holevo2005separability}. However, the ``only if''-direction is more complicated. It turns out that a simple example of position measurement with itself, seen as a jointly measurable pair, already demonstrates that a richer structure than EBC-preparability is needed. This can be seen as a consequence of the following result.

\begin{theorem} \label{thmseparability}
    Let $M$ be a continuous extremal POVM defined in the standard Borel measurable space ($\Omega, \hB(\Omega)$) and operating in $\lh$. If there exists a POVM $N$ operating in $ \li(\ki)$ and an EB-channel $\Lambda:\th \to \tk$ such that $M(X)=\Lambda^*(N(X))$ for all  $X \in \hA$, then the Hilbert space $\ki$ cannot be separable.
\end{theorem}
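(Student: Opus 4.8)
The plan is to argue by contradiction: assume $\ki$ is separable and show that $M$ then cannot be both continuous and extremal, contradicting the hypotheses. The strategy is to exploit the measure-and-prepare structure of the entanglement-breaking channel to represent $M$ as a purely ``classical'' mixture of a fixed family of rank-one effects, and then to use the non-atomicity coming from continuity to split one scalar weight and exhibit $M$ as a nontrivial convex combination of two distinct POVMs.

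First I would unfold the factorization $M(X)=\Lambda^*(N(X))$ using the structure of $\Lambda$. Since $\Lambda$ is entanglement breaking (the $n=1$ case of the $n$-PEB property) and $\ki$ is assumed separable, I would invoke the rank-one Kraus representation of EB channels (the separable analogue of the rank-restricted decomposition used in the finite case), writing $\Lambda(\rho)=\sum_k \kb{\psi_k}{\phi_k}\rho\,\kb{\phi_k}{\psi_k}$ with at most countably many terms. Dualizing, $\Lambda^*(B)=\sum_k \ip{\psi_k}{B\psi_k}\,\kb{\phi_k}{\phi_k}$, so that
\[
 M(X)=\sum_k \nu_k(X)\,\kb{\phi_k}{\phi_k},\qquad \nu_k(X):=\ip{\psi_k}{N(X)\psi_k},
\]
where each $\nu_k$ is a finite non-negative scalar measure on $(\Om,\hB(\Om))$ and $\sum_k\nu_k(\Om)\kb{\phi_k}{\phi_k}=\id$. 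This exhibits $M$ as a post-processing of a discrete family of \emph{fixed} rank-one operators; the gain over a general factorization is that each term is now a scalar measure multiplying a single rank-one operator.

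Next I would extract non-atomicity. Because the operators $\kb{\phi_k}{\phi_k}$ are positive and positively independent (a non-negative combination of them vanishes only if all coefficients vanish), continuity of $M$, i.e.\ $M(\{\om\})=0$ for every $\om$, forces $\nu_k(\{\om\})=0$ for all $k$ and $\om$; hence every $\nu_k$ is non-atomic. Fix an index $k_0$ with $\nu_{k_0}\neq 0$, which exists since $\sum_k\nu_k(\Om)\kb{\phi_k}{\phi_k}=\id$. By the intermediate-value property of non-atomic measures, choose $A\in\hB(\Om)$ with $\nu_{k_0}(A)=\tfrac12\nu_{k_0}(\Om)$ and set $D(X)=\big(\nu_{k_0}(X\cap A)-\nu_{k_0}(X\setminus A)\big)\kb{\phi_{k_0}}{\phi_{k_0}}$. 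Then $D(\Om)=0$, while $(M\pm D)(X)=\sum_{k\neq k_0}\nu_k(X)\kb{\phi_k}{\phi_k}+2\nu_{k_0}(X\cap A)\kb{\phi_{k_0}}{\phi_{k_0}}$ (respectively with $X\setminus A$) are both genuine POVMs, and $D(A)=\tfrac12\nu_{k_0}(\Om)\kb{\phi_{k_0}}{\phi_{k_0}}\neq0$. Thus $M=\tfrac12(M+D)+\tfrac12(M-D)$ is a nontrivial convex decomposition, contradicting extremality; hence $\ki$ cannot be separable.

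The step I expect to be the main obstacle is the first one, namely pinning down the correct Kraus-type representation of the EB channel in infinite dimensions. One must be careful that, in the separable setting, an EB channel need not reduce to a countable rank-one sum; in general it carries only a pointwise (continuous) rank-one decomposition $\Lambda^*(B)=\int_{\mathcal X}\ip{\psi_x}{B\psi_x}\kb{\phi_x}{\phi_x}\,d\mu(x)$. The argument above then has to be run with a measurable selection $x\mapsto A_x$ of bisecting sets and the perturbation $D(X)=\int_{\mathcal X}\big(\nu_x(X\cap A_x)-\nu_x(X\setminus A_x)\big)\kb{\phi_x}{\phi_x}\,d\mu(x)$; the delicate point becomes certifying $D\neq0$, i.e.\ that the split weights do not average out against the operator integral, which I would handle by evaluating $\tr{D(X)\rho}$ for a suitably chosen state $\rho$ and set $X$. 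The role of the entanglement-breaking hypothesis is precisely to reduce the rigid operator-valued measure of a generic factorization to these scalar weights $\nu_k$ (resp.\ $\nu_x$), for which non-atomicity can be converted into the splitting that defeats extremality; for a generic channel the corresponding measure can be projection-valued and rigid, which is exactly why continuous extremal observables such as position do survive there.
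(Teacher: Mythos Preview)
Your discrete argument is clean and would work if the countable rank-one Kraus representation were available, but the continuous fix you sketch has a fatal gap that is more serious than the $D\neq 0$ issue you flag. In the continuous decomposition $M(X)=\int_{\hX}\nu_x(X)\,\kb{\phi_x}{\phi_x}\,d\mu(x)$, continuity of $M$ does \emph{not} force the individual $\nu_x$ to be non-atomic: for each fixed $\omega$ you only get $\nu_x(\{\omega\})=0$ for $\mu$-a.e.\ $x$, and the null set depends on $\omega$. Concretely, take $\hX=\Om=[0,1]$, $\mu$ Lebesgue, and $\nu_x=\delta_x$; then every $\nu_x$ is purely atomic while $M(X)=\int_X\kb{\phi_x}{\phi_x}\,dx$ is atomless. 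Worse, this is not an accident: extremality of $M$ together with the post-processing representation $M(X)=\int p(X|x)\,dG(x)$ \emph{forces} each $p(\cdot|x)$ to be $\{0,1\}$-valued, i.e.\ $\nu_x$ to be a point mass $\delta_{f(x)}$ (this is the Jen\v{c}ov\'a result the paper invokes). So in the generic EB case the bisection step has nothing to bisect, and your perturbation $D$ collapses identically. The reason your discrete argument survives is that there the index set is countable, so $M(\{\omega\})=0$ really does kill each coefficient; with a continuum of indices the atoms can ``slide'' with $x$.

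The paper's proof proceeds by a genuinely different mechanism. Using Holevo's measure-and-prepare form $M(X)=\int\tr{\sigma_\lambda N(X)}\,dG(\lambda)$ and Jen\v{c}ov\'a's theorem, extremality forces $\tr{\sigma_\lambda N(\cdot)}=\delta_{f(\lambda)}$ for a measurable $f$. The crucial extra step---which has no analogue in your scalar-weight picture---is to go back to the \emph{operators on $\ki$}: from $\tr{\sigma_\lambda N(X)}\in\{0,1\}$ one deduces $N(X)=\delta_{f(\lambda)}(X)P_\lambda+P_\lambda^\perp N(X)P_\lambda^\perp$, and then that $P_\lambda\perp P_{\lambda'}$ whenever $f(\lambda)\neq f(\lambda')$. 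Continuity of $M$ is used only at the end, to show $f$ has uncountable range, whence $\ki$ carries uncountably many mutually orthogonal projections. In short, your argument tries to extract a contradiction from the scalar kernels $\nu_x$ alone, but the information that pins down $\ki$ lives in the prepared states $\sigma_\lambda$, which your scheme discards.
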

\begin{proof}
    
Suppose now that there is a  POVM $N: \hB(\Omega) \to \mathcal{L}(\ki)$ and an EB-channel $\Lambda:\th \to \mathcal{T}(\ki)$ such that $\Lambda^*(N(X))=M(X)$ for all $X \in \hB(\Omega)$. Using Theorem 2 of \cite{holevo2005separability}, this implies that, for all $X$, we have  $M(X)=\int_{\Omega'} \tr{\sigma_{\lambda}N(X)} dG(\lambda)$ for some POVM $G$ in the measurable space $(\Omega',\hB(\Omega'))$ and measurable family of states $\sigma_{\lambda}$. 

 As the POVM $M$ is extremal, we have by \cite[Theorem 3.3]{Jencova2007} that each $(X,\lambda) \mapsto \tr{\sigma_{\lambda}N(X)}$ must be $\{0,1\}$-valued almost everywhere. Since we are dealing with standard Borel spaces,  we have a measurable function $f:\Omega' \to \Omega$ such that $\tr{\sigma_{\lambda}N(X)}=\delta_{f(\lambda)}(X)$ almost everywhere \cite[Theorem 3.3]{Jencova2007}. We denote some countable generating ring of the $\sigma$-algebra $ \hB(\Omega)$ by $\{X_k\}_{k \in \N}$. Then we have $\tr{\sigma_{\lambda}N(X_k)}=\delta_{f(\lambda)}(X_k)$ for $\lambda \notin C_k$, where $G(C_k)=0$. Letting then $C:=\bigcup_k C_k$, we have $G(C)=0$ and for all $n$ the equality $\tr{\sigma_{\lambda}N(X_n)}=\delta_{f(\lambda)}(X_n)$  holds for $\lambda \notin C$. Then by Caratheodory's extension theorem we have  $\tr{\sigma_{\lambda}N(X)}=\delta_{f(\lambda)}(X)$ for all $X$ and $\lambda \notin C$, since the extension is unique for finite measures. We can then define a new measurable space as the standard Borel space ($\Omega'\setminus C,\hB(\Omega'\setminus C))$. Consequently we assume that $\tr{\sigma_{\lambda}N(\cdot)}$ is $\delta_{f(\lambda)}$ everywhere.

Now using this we find that 
\begin{align}
N(X)=\delta_{f(\lambda)}(X)P_{\lambda}+ P_{\lambda}^{\perp}N(X)P_{\lambda}^{\perp}, \label{NX}
\end{align} 
where $P_{\lambda}$ is the support projection of $\sigma_{\lambda}$. This is seen as follows. Let $0\leq E\leq \id$ be such that $\tr{\sigma_{\lambda}E}=0$. Then obviously $P_{\lambda}E P_{\lambda}=0$ and thus  $P^{\perp}_{\lambda}E P_{\lambda}=P_{\lambda}E P^{\perp}_{\lambda}=0$ yielding  $E=P^{\perp}_{\lambda}EP^{\perp}_{\lambda}$. Now since $\tr{\sigma_{\lambda}N(X)}$ is $\{0,1\}$-valued, for a fixed $X$ we only have the two cases $\tr{\sigma_{\lambda}N(X)}=0$ or $\tr{\sigma_{\lambda}(\id-N(X))}=0$. Thus on one hand if $f(\lambda) \notin X$ then $N(X)=P^{\perp}_{\lambda}N(X)P^{\perp}_{\lambda}$. On the other hand if $f(\lambda) \in X$, we have $\id-N(X)=P^{\perp}_{\lambda}(\id-N(X))P^{\perp}_{\lambda}$, which implies $N(X)=P_{\lambda}+P^{\perp}_{\lambda}N(X)P^{\perp}_{\lambda}$. In both cases equation (\ref{NX}) holds. \par 

Let us then choose $\lambda'\neq \lambda$ such that $f(\lambda)\neq f(\lambda')$. Then if $X=\Omega\setminus\{f(\lambda')\}$ (which is measurable), we have 
\begin{align}
0=\tr{\sigma_{\lambda'}N(X)}=\tr{\sigma_{\lambda'}P_{\lambda}}+\tr{P_{\lambda}^{\perp}\sigma_{\lambda'}P_{\lambda}^{\perp}N(X)}. \label{unc}
\end{align}
Since both of the terms on the RHS of (\ref{unc}) are positive, we have especially $\tr{\sigma_{\lambda'}P_{\lambda}}=0$. This implies that $P_{\lambda}P_{\lambda'}=0$. Therefore, if the image of $f$ is uncountable, there is an uncountable amount of orthogonal projections, which would mean that the space $\ki$ is not separable. \par We thus show that the image of $f$ must be uncountable. Aiming for a contradiction, suppose that the image is countable: $f(\Omega')=\{x_1,x_2,x_3,\dots\}$. Now by Theorem 3.3 of \cite{Jencova2007} we have that $M(X)=G(f^{-1}(X))$. Therefore 
\begin{align*}
    M(X)=G(f^{-1}(X)) =G(\Omega' \cap f^{-1}(X)) \leq G(f^{-1}(f(\Omega') \cap X))=\sum_i G(f^{-1}(\{x_i\} \cap X))
\end{align*}
Thus $M$ is a discrete POVM, which is a contradiction, and the proof is finished.
\end{proof}
Let us now show how this relates to joint measurability via a concrete example. For this, let $L^2(\R)$ be the Lebesgue space of square integrable functions and $Q: \hB(\R) \to \hL(L^2(\R))$ the spectral measure of the canonical position operator defined by $(Q(X)\fii)(\lambda)=\CHI{X}(\lambda)\fii(\lambda)$ for almost all $\lambda \in \R$. Here $\CHI{X}$ denotes the characteristic function of the Borel set $X \in \hB(\R)$. Define $Q_1=Q_2=Q$ so that $(Q_1,Q_2)$ is a jointly measurable pair, a possible joint POVM being $G(X_1\times X_2)=Q(X_1\cap X_2)$, $X_1,\,X_2\in\hB(\R)$. As $Q$ is projection valued it is especially an extremal POVM \cite{Pello_extreme} and furthermore it has an empty discrete spectrum. Thus by Theorem \ref{thmseparability} the EBC-preparability condition, $Q_i(X)=\Lambda^*(N_i(X))$ for some POVM pair $(N_1,N_2)$, would imply that the POVMs $N_i$ operate in a non-separable Hilbert space.

Our aim is to define a notion of $n$-simulability that would reduce to joint measurability in the case $n=1$. Based on the above result, we are forced to go beyond the $n$-PEB definition. It turns out that the following definition, which is a generalization of the instrument-based definition in Eq.~(\ref{Eq:simfinite}), has the desired property.

\begin{definition}\label{defsimulability}
Let $\hM:=\{M_x\}_x$ be a set of POVMs with $M_x$ defined in the measurable space $(\Omega_x,\hA_x)$ operating in a separable Hilbert space. Then $\hM$ is \emph{$n$-simulable} if and only if there exists a probability space $(\Omega,\hA,\mu)$, a weakly measurable family $\{K_{\lambda}:\hD \to \ki\}_{\lambda \in \Omega}$ of possibly unbounded operators with a common dense domain $\hD\subseteq\hi$, and the rank restriction $\mathrm{rank}(K_{\lambda})\leq n$ for $\mu$-almost all $\lambda$, and a weakly measurable family of POVMs $\{N_{x,\lambda}:\hA_x\to\li(\ki)\}_{x,\lambda}$ such that for all $\fii,\psi \in \hD$,  $x$ and $X_x\in \hA_x$ we have 
\begin{align}
    \ip{\fii}{M_x(X_x)|\psi}=\int_{\Omega} \ip{\fii}{K_{\lambda}^* N_{x,\lambda}(X_x) K_{\lambda}|\psi} \, d\mu(\lambda)
\end{align}
\end{definition}

The following two results show that this definition is natural and consistent with the requirements. The first one also serves as a useful characterization of $n$-PEB channels in the infinite dimensional continuous variable setting. \par 
It was shown \cite[Proposition 9]{shirokov2013schmidt} that there exist $n$-PEB channels that can only have infinite-rank Kraus decompositions. However, it turns out to be useful for the purpose of simulability of measurements to prove the following result that characterizes $n$-PEB channels using a \emph{pointwise Kraus decomposition}, similar to Definition \ref{defsimulability}. That is, the following in conjuction with Proposition \ref{thmjointmeas} shows that the functions $\lambda \mapsto N_{x,\lambda}$ generally need not be constant.
\begin{proposition}\label{thmnpeb}
A quantum channel $\Lambda:\th \to \tk$ is $n$-partially entanglement breaking if and only if there exists a probability space $(\Omega,\hA,\mu)$, a weakly measurable family $\{K_{\lambda}\}_{\lambda \in \Omega}$ of possibly unbounded operators with a common dense domain $\hD\subseteq\hi$, and the rank restriction $\mathrm{rank}(K_{\lambda})\leq n$ for $\mu$-almost all $\lambda$, such that for all $\fii,\psi \in \hD$ and $A \in \li(\ki)$ we have 
\begin{align}
    \ip{\fii}{\Lambda^*(A)|\psi}=\int_{\Omega} \ip{\fii}{K_{\lambda}^* A K_{\lambda}|\psi} \, d\mu(\lambda)
\end{align}
\end{proposition}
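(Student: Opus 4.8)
The plan is to prove the two implications separately, treating the reverse direction (pointwise Kraus decomposition $\Rightarrow$ $n$-PEB) as the easier one and reserving the real work for the forward direction. Throughout I pass between the Heisenberg identity in the statement and its Schr\"odinger (predual) form $\Lambda(\rho)=\int_{\Omega}K_{\lambda}\rho K_{\lambda}^{*}\,d\mu(\lambda)$, understood weakly on $\hD$, and I use the generalized Schmidt number of \cite{shirokov2013schmidt}.

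For the ``if'' direction, assume a family $\{K_{\lambda}\}$ with $\mathrm{rank}(K_{\lambda})\le n$ for $\mu$-a.e.\ $\lambda$. I would first check that the representation tensorizes, i.e.\ that $(\Lambda\otimes\id_{\hR})^{*}$ admits the analogous pointwise decomposition with operators $K_{\lambda}\otimes\id_{\hR}$, by verifying the identity on product operators $A\otimes B$ and extending by linearity and weak continuity. Then for a pure state $\ket{\Psi}$ in the algebraic tensor product $\hD\otimes\hR$ (the general pure state following by approximation) one obtains $(\Lambda\otimes\id_{\hR})(\kb{\Psi}{\Psi})=\int_{\Omega}(K_{\lambda}\otimes\id_{\hR})\kb{\Psi}{\Psi}(K_{\lambda}^{*}\otimes\id_{\hR})\,d\mu(\lambda)$. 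Since $(K_{\lambda}\otimes\id_{\hR})\ket{\Psi}$ lies in $\mathrm{ran}(K_{\lambda})\otimes\hR$ and $\dim\mathrm{ran}(K_{\lambda})=\mathrm{rank}(K_{\lambda})\le n$, each integrand is, after normalization, a pure state of Schmidt rank at most $n$; this exhibits a continuous pure-state decomposition with essential Schmidt-rank bound $n$, so $\mathrm{SN}\le n$. The general mixed state follows by decomposing $\rho$ into pure states and combining the two measures, or by appealing to convexity and closedness of the set of states of Schmidt number $\le n$ \cite{shirokov2013schmidt}.

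The substance is the ``only if'' direction. I would fix a faithful state on $\hi$, with full-rank density operator $\rho_{\tau}=\sum_{j}t_{j}\kb{e_{j}}{e_{j}}$, $t_{j}>0$, and its purification $\ket{\Psi_{\tau}}=\sum_{j}\sqrt{t_{j}}\,\ket{e_{j}}\otimes\ket{e_{j}}\in\hi\otimes\hi$, and form the generalized Choi state $\omega_{\Lambda}:=(\Lambda\otimes\id)(\kb{\Psi_{\tau}}{\Psi_{\tau}})\in\hS(\ki\otimes\hi)$. The $n$-PEB hypothesis gives $\mathrm{SN}(\omega_{\Lambda})\le n$; since the Schmidt rank is integer-valued, every essential supremum appearing in the definition of $\mathrm{SN}$ is an extended integer, so the defining infimum is attained, and there is a probability measure $\mu$ with $\omega_{\Lambda}=\int\kb{\phi_{\lambda}}{\phi_{\lambda}}\,d\mu(\lambda)$ and $\mathrm{SR}(\phi_{\lambda})\le n$ for $\mu$-a.e.\ $\lambda$. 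After a measurable selection $\lambda\mapsto\ket{\phi_{\lambda}}$ (available on standard Borel spaces), I define the Kraus operators by inverting the channel--state duality, $\ket{\phi_{\lambda}}=(K_{\lambda}\otimes\id)\ket{\Psi_{\tau}}$, which amounts to $K_{\lambda}\ket{e_{j}}=t_{j}^{-1/2}\sum_{i}\ip{f_{i}\otimes e_{j}}{\phi_{\lambda}}\,\ket{f_{i}}$ on the orthonormal basis, densely defined on $\hD=\lin\{e_{j}\}$; the divergence of $t_{j}^{-1/2}$ is exactly what makes the $K_{\lambda}$ unbounded in general, while $\mathrm{rank}(K_{\lambda})=\mathrm{SR}(\phi_{\lambda})\le n$ a.e. Finally, pairing both expressions for $\omega_{\Lambda}$ against $A\otimes\kb{e_{l}}{e_{k}}$ gives $\sqrt{t_{k}t_{l}}\,\ip{e_{l}}{\Lambda^{*}(A)|e_{k}}=\int\sqrt{t_{k}t_{l}}\,\ip{e_{l}}{K_{\lambda}^{*}AK_{\lambda}|e_{k}}\,d\mu(\lambda)$; cancelling the nonzero weights and extending bilinearly over $\hD$ yields the claimed identity.

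I expect the principal obstacle to be the functional-analytic bookkeeping forced by the unbounded $K_{\lambda}$: proving weak measurability of $\lambda\mapsto K_{\lambda}$ from the selection $\ket{\phi_{\lambda}}$ (which is immediate for the sesquilinear forms, since $\ip{e_{l}}{K_{\lambda}^{*}AK_{\lambda}|e_{k}}=(t_{k}t_{l})^{-1/2}\ip{\phi_{\lambda}}{(A\otimes\kb{e_{l}}{e_{k}})|\phi_{\lambda}}$), verifying that $\hD=\lin\{e_{j}\}$ is a genuine common dense domain on which every $K_{\lambda}$ acts and all integrals converge, and justifying the interchange of $\int d\mu$ with the basis expansions (controlled by positivity together with monotone convergence). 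A secondary point is the attainment and measurability claim for the Schmidt number, which I would settle using the integer-valuedness of $\mathrm{SR}$ and the standard Borel structure assumed throughout, possibly routing through the closedness results of \cite{shirokov2013schmidt}.
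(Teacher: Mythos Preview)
Your proposal is correct and essentially mirrors the paper's proof: the forward direction builds the unbounded $K_\lambda$ from a pure-state decomposition of the Choi state $(\Lambda\otimes\id)(\kb{\Psi_\tau}{\Psi_\tau})$ via the same $F_\varphi\sigma^{-1/2}$ construction you describe, and the reverse direction likewise relies on truncation together with closedness of $\hS_n(\ki\otimes\hi)$. The one streamlining in the paper's ($\Leftarrow$) argument is that, rather than verifying the Schmidt-number bound for arbitrary pure bipartite inputs as you propose, it invokes \cite[Proposition~8]{shirokov2013schmidt} to reduce the $n$-PEB check to the single full-rank purification $\ket{\eta}$.
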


\begin{proof}
    ($\Rightarrow$) Let $\Lambda$ be $n$-PEB and fix an orthonormal basis $\{\ket{k}\}_{k=1}^{\dim\hi}$. We define a faithful full rank state $\sigma:= \sum_{k=1}^{\infty} a_k \kb kk$, where $a_k>0$ for all $k$ and $\sum_k a_k=1$. Let then $\ket{\psi_{\sigma}}=\sum_{k=1}^{\dim\hi} \sqrt{a_k}\ket{k}\otimes\ket{k}$ be a purification of $\sigma$ in $\hi \otimes \hi$. As $\Lambda$ is $n$-PEB, we have $\Lambda \otimes \id (\kb{\psi_{\sigma}}{\psi_{\sigma}}) \in \hS_n(\ki \otimes \hi)$, where $\hS_n(\ki \otimes \hi):=\{ \rho \in \hS(\ki \otimes \hi) \, \, | \, \, \mathrm{SN}(\rho)\leq n\}$. We denote the pure states in this set by $\hS_n^1(\ki \otimes \hi)$. By Proposition 1B in \cite{shirokov2013schmidt} and Lemma 1 in \cite{holevo2005separability} we have the following with some abuse of notation:
    \begin{align}
        \Lambda \otimes \id (\kb{\psi_{\sigma}}{\psi_{\sigma}})=\int_{\hS_n^1(\ki \otimes \hi)} \kb{\fii}{\fii} \, d\mu(\fii) \label{eq1}
    \end{align}
    Here $\mu$ is a Borel probability measure. \par 
    Let then $T \in \th$ be arbitrary. With a direct calculation we find that 
    \begin{align}
        \Lambda(\sigma^{1/2}T\sigma^{1/2})=\mathrm{tr}_{\hi}[(\id \otimes T^{\rm T})\Lambda \otimes \id (\kb{\psi_{\sigma}}{\psi_{\sigma}})] \label{eq2}
    \end{align}
    Here the transpose $T^{\rm T}$ is in the fixed basis $\{\ket{k}\}_{k=1}^{\dim\hi}$. By combining equation (\ref{eq1}) and (\ref{eq2}) we aim to find the desired integral form. For this fix an orthonormal basis $\{\ket{\eta_m}\}_{m=1}^{\dim\ki}$ for $\ki$ and, for each $\fii\in\ki\otimes\hi$, define the Hilbert-Schmidt operators $F_{\fii}:\hi \to \ki$ by the relations $\ip{\eta_m}{F_{\fii}|k}=\ip{\eta_m\otimes k}{\fii}$. 
    If $\kb\fii\fii \in \hS_n^1(\ki \otimes \hi)$ then
     $\mathrm{SR}(\kb\fii\fii)\leq n$, and we have that $\mathrm{rank}(F_{\fii})\leq n$. The mapping $\fii\mapsto F_{\fii}$ is also clearly (weakly) continuous. Now the following holds weakly: 
    \begin{align}
        \kb{\fii}{\fii}=\sum_{i,j,k,l=1}^{\infty} \ip{\eta_i}{F_{\fii}|j}\ip{k}{F_{\fii}^*|\eta_l}\kb{\eta_i\otimes j}{\eta_l\otimes k}
    \end{align}
    This implies that
    \begin{align}
        \mathrm{tr}_{\hi}[(\id \otimes T^{\rm T})\kb{\fii}{\fii}]=\sum_{i,j,k,l=1}^{\infty} \ip{\eta_i}{F_{\fii}|j}\ip{k}{F_{\fii}^*|\eta_l}\ip{j}{T|k}\kb{\eta_i }{\eta_l}=F_{\fii}T F_{\fii}^*
    \end{align}
    Combining this with equations (\ref{eq1}) and (\ref{eq2}) we get\footnote{Note that, if $\kb\fii\fii=\kb\psi\psi$, then $\psi=c\fii$, $c\in\C$, $|c|=1$, and $F_\psi=c F_\fii$ but $F_{\fii}T F_{\fii}^*=F_{\psi}T F_{\psi}^*$.} 
    \begin{align}
        \Lambda(\sigma^{1/2}T\sigma^{1/2})=\int_{\hS_n^1(\ki \otimes \hi)} F_{\fii}T F_{\fii}^* \, d\mu(\fii)
    \end{align}
    Let us then define the unbounded operators $K_{\fii}:=F_{\fii}\sigma^{-1/2}$. These are densely defined in a common dense domain ${\rm ran}\,\sigma^{1/2}$ due to $\sigma$ being a faithful full rank state. Then in the Heisenberg picture we have densely $\Lambda^*(A)=\int_{\hS_n^1(\ki \otimes \hi)} K_{\fii}^*A K_{\fii} \, d\mu(\fii) $. This gives us the first direction.
    \par 
    ($\Leftarrow$) Let us assume the pointwise Kraus representation for the channel $\Lambda$. We aim to use Proposition 8 in \cite{shirokov2013schmidt} and therefore we look to define a suitable state $\ket{\eta}$ with full rank partial states. For this, let $\{\ket{k}\}_{k=1}^{\dim\hi}$ be an orthonormal basis inside the dense domain $\hD$ and $(a_k)$ a sequence of strictly positive numbers with $\sum_k a_k^2=1$. With this we define the following.
    \begin{align}
        \ket{\eta}&:=\sum_{k=1}^{\dim\hi} a_k \ket k\otimes\ket k \\
        \ket{\eta_m}&:=\frac{1}{\sqrt{\sum_{k=1}^m a_k^2}}\sum_{k=1}^m a_k \ket k\otimes\ket k
    \end{align}
    Here $\lim_{m\to \infty} \kb{\eta_m}{\eta_m} =\kb{\eta}{\eta}$ in the trace norm. \\
    We first show that $(\Lambda \otimes \id)(\kb{\eta_m}{\eta_m}) \in \hS_n(\ki \otimes \hi)$ for all $m \in \N$. To this aim, we calculate $\Lambda (\kb{j}{k})$. Let $\fii,\psi \in \ki$ be arbitrary.
    \begin{align}
        \ip{\fii}{\Lambda(\kb{j}{k})|\psi}&=\tr{\kb{j}{k}\Lambda^*(\kb{\psi}{\fii})}=\int_{\Omega} \ip{k}{K_{\lambda}^*|\psi}\ip{\fii}{K_{\lambda}|j} \, d\mu(\lambda)
    \end{align}
    Therefore $\Lambda(\kb{j}{k})=\int_{\Omega} K_{\lambda}\kb{j}{k}K_{\lambda}^* \, d\mu(\lambda)$. Here the $\lambda \mapsto K_{\lambda}\kb{j}{k}K_{\lambda}^*$ is trace-class-Bochner-integrable, since it is Bochner measurable by assumption and also we have 
    \begin{align}
        \int_{\Omega} \Vert K_{\lambda}\kb{j}{k}K_{\lambda}^* \Vert_1 \, d\mu(\lambda) &= \int_{\Omega} \Vert K_{\lambda}\ket{j} \Vert \Vert K_{\lambda}\ket{k} \Vert \, d\mu(\lambda) \\
        &\leq \left(\int_{\Omega} \Vert K_{\lambda}\ket{j} \Vert^2 \, d\mu(\lambda)\right)^{1/2}\left(\int_{\Omega} \Vert K_{\lambda}\ket{k} \Vert^2 \, d\mu(\lambda)\right)^{1/2}=1
    \end{align}
    where we denote the trace norm by $\Vert \cdot \Vert_1$. 
    Here we used the fact that densely $\int_{\Omega} K_{\lambda}^* K_{\lambda} \, d\mu(\lambda)=\id$.
    Let then $C_m:=\frac{1}{\sqrt{\sum_{k=1}^m a_k^2}}$ so that 
    \begin{align}
        (\Lambda \otimes \id)(\kb{\eta_m}{\eta_m})&=C_m^2\sum_{j,k=1}^{m} a_j a_k \Lambda(\kb{j}{k}) \otimes \kb{j}{k} \\
        &=\int_{\Omega} (K_{\lambda} \otimes \id)\left(C_m^2\sum_{j,k=1}^{m} a_j a_k \kb{j}{k} \otimes \kb{j}{k}  \right)(K_{\lambda}^* \otimes \id) \, d\mu(\lambda) \\
        &=\int_{\Omega} (K_{\lambda} \otimes \id)\kb{\eta_m}{\eta_m} (K_{\lambda}^* \otimes \id) \, d\mu(\lambda).
    \end{align} 
    Define then the measures  $\nu_m$ by $\nu_m(X):=\int_X \tr{(K_{\lambda} \otimes \id)\kb{\eta_m}{\eta_m} (K_{\lambda}^* \otimes \id)} \, d\mu(\lambda)$. These are probability measures, as in the domain $\hD$, $\int_{\Omega} K_{\lambda}^* K_{\lambda} \, d\mu(\lambda)$  is equal to the identity, and each $\ket{\eta_m}$ consist of vectors belonging to this domain: $\nu_m(\Omega)=\ip{\eta_m}{\int_{\Omega} K_{\lambda}^* K_{\lambda} \, d\mu(\lambda) \otimes \id| \eta_m}=1$. Also $\tr{(K_{\lambda} \otimes \id)\kb{\eta_m}{\eta_m} (K_{\lambda}^* \otimes \id)}>0$ $\nu_m$-almost everywhere. Then 
    \begin{align}
        (\Lambda \otimes \id)(\kb{\eta_m}{\eta_m})&=\int_{\Omega} (K_{\lambda} \otimes \id)\kb{\eta_m}{\eta_m} (K_{\lambda}^* \otimes \id) \, d\mu(\lambda) \\
        &=\int_{\Omega} \frac{(K_{\lambda} \otimes \id)\kb{\eta_m}{\eta_m} (K_{\lambda}^* \otimes \id)}{\tr{(K_{\lambda} \otimes \id)\kb{\eta_m}{\eta_m} (K_{\lambda}^* \otimes \id)}} \, d\nu_m(\lambda)
    \end{align}
    One can then easily see that since $\mathrm{rank}(K_{\lambda})\leq n$, for the pure states $\kb{\psi_{m\lambda}}{\psi_{m\lambda}}:=\frac{(K_{\lambda} \otimes \id)\kb{\eta_m}{\eta_m} (K_{\lambda}^* \otimes \id)}{\tr{(K_{\lambda} \otimes \id)\kb{\eta_m}{\eta_m} (K_{\lambda}^* \otimes \id)}}$ we also have $\mathrm{SR}(\kb{\psi_{m\lambda}}{\psi_{m\lambda}})\leq n$. Therefore by Proposition 1 B) in \cite{shirokov2013schmidt} we have that $(\Lambda \otimes \id)(\kb{\eta_m}{\eta_m}) \in \hS_n(\ki \otimes \hi)$ for all $m$. \par 
    Now as the channel $\Lambda \otimes \id$ is trace-norm continuous, we have $\Lambda \otimes \id(\kb{\eta}{\eta})=\lim_{m\to \infty} \Lambda \otimes \id (\kb{\eta_m}{\eta_m})$ (in the case $\dim\hi=\infty$).  Since $\hS_n(\ki \otimes \hi)$ is closed in the trace norm, we have that $\Lambda \otimes \id(\kb{\eta}{\eta}) \in \hS_n(\ki \otimes \hi)$. Thus by Proposition 8 in \cite{shirokov2013schmidt} $\Lambda$ is $n$-PEB.
    \end{proof}
\begin{proposition}\label{thmjointmeas}
   The set $\hM:=\{M_x\}_x$ is jointly measurable if and only if it is 1-simulable in the sense of Definition \ref{defsimulability}.
\end{proposition}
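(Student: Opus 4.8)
The plan is to prove both implications directly, exploiting that for $n=1$ the rank restriction $\mathrm{rank}(K_\lambda)\le 1$ forces every Kraus operator to factor through a one-dimensional space. For the direction ``1-simulable $\Rightarrow$ jointly measurable'', I would write $K_\lambda\psi=\ell_\lambda(\psi)\,\eta_\lambda$ with a unit vector $\eta_\lambda\in\ki$ and a (possibly unbounded) linear functional $\ell_\lambda$ on $\hD$, choosing $\eta_\lambda$ weakly measurably by selecting it from a fixed countable dense subset of $\hD$ on the set $\{K_\lambda\neq 0\}$ (on the complement $G$ will carry no mass). A short computation then gives $\ip{\fii}{K_\lambda^* N_{x,\lambda}(X_x) K_\lambda|\psi}=\ip{\fii}{K_\lambda^* K_\lambda|\psi}\,p(X_x|x,\lambda)$ with $p(X_x|x,\lambda):=\ip{\eta_\lambda}{N_{x,\lambda}(X_x)|\eta_\lambda}$. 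Defining the candidate parent observable weakly by $G(A):=\int_A K_\lambda^* K_\lambda\,d\mu(\lambda)$, and inserting $X_x=\Omega_x$ into the simulation identity, one sees $G(\Omega)=\id$ densely (hence everywhere), so $G$ is a genuine POVM and $p(\cdot|x,\lambda)$ is a Markov kernel because each $N_{x,\lambda}$ is normalized. The simulation identity then reads $\ip{\fii}{M_x(X_x)|\psi}=\int_\Omega p(X_x|x,\lambda)\,d\ip{\fii}{G(\cdot)|\psi}(\lambda)$, i.e. $M_x(X_x)=\int_\Omega p(X_x|x,\lambda)\,dG(\lambda)$, which is exactly joint measurability.

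For the converse ``jointly measurable $\Rightarrow$ 1-simulable'', I would first note the reduction $K_\lambda^* N_{x,\lambda}(X_x) K_\lambda=p(X_x|x,\lambda)\kb{k_\lambda}{k_\lambda}$ valid for any $\ki$ once $\mathrm{rank}(K_\lambda)\le1$, so without loss of generality $\ki=\C$, the POVMs $N_{x,\lambda}$ become Markov kernels, and $K_\lambda=\ip{k_\lambda}{\cdot}$ for a possibly unbounded vector $k_\lambda$. Thus 1-simulability reduces to representing the parent POVM $G$ (supplied by joint measurability) as a rank-one pointwise Kraus integral $\ip{\fii}{G(A)|\psi}=\int_A \ip{\fii}{k_\lambda}\ip{k_\lambda}{\psi}\,d\mu(\lambda)$; composing the associated refinement kernel with the given post-processing $p$ then yields the required $N_{x,\lambda}$. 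Concretely, I would fix a faithful state $\rho_0$, set the probability measure $\nu:=\mathrm{tr}[\rho_0 G(\cdot)]$ (faithfulness gives $G\ll\nu$), take a weak Radon--Nikodym density $D(\lambda)$, decompose it measurably into rank-one pieces $D(\lambda)=\sum_i \kb{d_{\lambda,i}}{d_{\lambda,i}}$, and re-index over $\Omega'\times\N$ with a probability measure $\mu$ equivalent to $\nu$ times counting measure, rescaling the $d_{\lambda,i}$ so that $\int K_\lambda^* K_\lambda\,d\mu=\id$ densely. Setting $N_{x,(\lambda,i)}(X_x):=p(X_x|x,\lambda)$ (ignoring the refinement index) reproduces $M_x$, and the method parallels the $(\Rightarrow)$ direction of Proposition \ref{thmnpeb}.

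The \emph{main obstacle} is precisely the construction of this rank-one pointwise Kraus refinement of $G$ with \emph{unbounded} $k_\lambda$. The subtlety is the one already exposed by the position example: the Radon--Nikodym density of a projection-valued (or otherwise ``continuous'') POVM need not be a bounded operator---for $Q$ it is the evaluation form $\overline{\fii(\lambda)}\psi(\lambda)/\rho_0(\lambda,\lambda)$, represented by an unbounded functional $k_\lambda$---so the density and its measurable rank-one decomposition must be carried out at the level of positive sesquilinear forms on the common domain $\hD$, with weak measurability and the common-domain condition controlled throughout. This is exactly where separability and the standard Borel hypothesis enter, through measurable selection/diagonalization and direct-integral (minimal dilation) machinery, and it is the reason the definition is forced beyond the entanglement-breaking-channel formulation of Theorem \ref{thmseparability}. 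The remaining steps---the probability normalization, $\sigma$-additivity, and Markov-kernel composition---are routine bookkeeping.
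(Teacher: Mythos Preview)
Your proposal is correct and follows essentially the same route as the paper. In the direction 1-simulable $\Rightarrow$ jointly measurable, your factorisation $K_\lambda\psi=\ell_\lambda(\psi)\eta_\lambda$, the kernel $p(X_x|x,\lambda)=\ip{\eta_\lambda}{N_{x,\lambda}(X_x)|\eta_\lambda}$, and the parent $G(A)=\int_A K_\lambda^* K_\lambda\,d\mu(\lambda)$ are exactly what the paper does (with $\ell_\lambda=\bra{d(\lambda)}$ and $\eta_\lambda=\fii_\lambda$); for the converse, your Radon--Nikodym plus measurable rank-one decomposition plus re-indexing over $\Omega'\times\N$ is precisely the content of the ``rank-1 refinement'' and diagonal representation that the paper imports as black boxes from \cite{Pello9} and \cite[Theorem~1]{Pello5}, so what you call the main obstacle is handled there rather than argued from scratch.
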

\begin{proof}
    ($\Rightarrow$) Say the POVMs $M_x$ operate on the separable Hilbert space $\hi$. Now by assumption there exists a POVM $G$ defined on ($\Omega,\hA$), and Markov kernels $p(\cdot |x,\cdot)$ such that,  for all $x$ and $X_x$, 
    \begin{align}
        M_x(X_x)=\int_{\Omega} p(X_x|x, \lambda) \, dG(\lambda). \label{jm2}
    \end{align}
    Here we may replace $G$ with its rank-1 refinement $G^1$ without loss of generality \cite{Pello9}. Therefore by Theorem 1 in \cite{Pello5} there exists a measurable\footnote{That is, the maps $\lambda\mapsto\ip{d(\lambda)}\fii$ are $\mu$-measurable.} map $\lambda \mapsto \bra{d(\lambda)}$, where each $\bra{d(\lambda)}$ is an linear functional defined on a common dense subspace $\hD$ of $\hi$, such that the following holds for all $X \in \hA$ and $\fii,\psi \in \hD$:
    \begin{align}
        \ip{\fii}{G(X)|\psi}=\int_{X} \ip{\fii}{d(\lambda)}\ip{d(\lambda)}{\psi} \, d\mu(\lambda). \label{rank1}
    \end{align}
    Here $\mu$ is a probability measure, and the notations $\bra{d(\lambda)}(\fii)=\ip{d(\lambda)}\fii=\overline{\ip{\fii}{d(\lambda)}}$ are used. Then by combining equations (\ref{jm2}) and (\ref{rank1}) we get, for all $\fii,\psi \in \hD$,
    \begin{align}
        \ip{\fii}{M_x(X_x)|\psi}=\int_{\Omega} p(X_x|x, \lambda) \ip{\fii}{d(\lambda)}\ip{d(\lambda)}{\psi} \, d\mu(\lambda).
    \end{align}
    Let then $\eta \in \hi$ ($=:\ki$) be a unit vector and define (possibly unbounded) operators $K_{\lambda}:=\kb{\eta}{d(\lambda)}$ which are obviously rank-1. For $N_{x,\lambda}(\cdot):=p(\cdot|x,\lambda)\id$ we have the following:
    \begin{align}
        \ip{\fii}{M_x(X_x)|\psi}=\int_{\Omega} \ip{\fii}{K_{\lambda}^*N_{x,\lambda}(X_x)K_{\lambda}|\psi} \, d\mu(\lambda).
    \end{align}
    \par 
    ($\Leftarrow$) Since the operators $K_{\lambda}$ are now rank-1, the image of $K_{\lambda}$ is $\C \fii_{\lambda}$ for some unit vector $\fii_{\lambda}$ in a Hilbert space $\ki$. Define then the linear functional $\bra{d(\lambda)}$ by $\ip{d(\lambda)}{\fii}=\ip{\fii_{\lambda}}{K_{\lambda}|\fii}$ for all $\fii \in \hD$. Then the integral formula reduces to the following.
    \begin{align}
        \ip{\fii}{M_x(X_x)|\psi}=\int_{\Omega} \ip{\fii_{\lambda}}{N_{x,\lambda}(X_x)|\fii_\lambda} \ip{\fii}{d(\lambda)}\ip{d(\lambda)}{\psi} \, d\mu(\lambda) \label{jm3}
    \end{align}
    The densely defined integrals $G(X):=\int_{X} \kb{d(\lambda)}{d(\lambda)} \, d\mu(\lambda)$ (in the sandwich-sense) define a POVM and $p(X_x|x,\lambda):=\ip{\fii_{\lambda}}{N_{x,\lambda}(X_x)|\fii_\lambda}$ defines a Markov kernel. Combining these into equation (\ref{jm3}) we see that $\{M_x\}_x$ is jointly measurable. 
\end{proof}

\begin{proposition}\label{thmsimulabilitydiscrete}
    Let $\hM:=\{M_{a|x}\}_{a,x}$ be a set of discrete POVMs defined in a finite-dimensional Hilbert space with a fixed finite amount of outcomes $a$ and settings $x$. $\hM$ is $n$-simulable in the sense of Definition \ref{defsimulability} if and only if it is $n$-simulable according to Eq.\ (\ref{Eq:simfinite}).
\end{proposition}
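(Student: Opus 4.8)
The plan is to prove the two implications separately, exploiting the fact that in the finite-dimensional setting all the analytic subtleties of Definition~\ref{defsimulability} collapse: a dense subspace of $\C^d$ is all of $\C^d$, so $\hD=\hi=\C^d$ and every $K_\lambda$ is an honest bounded operator, while the rank restriction merely says each $K_\lambda$ factors through an at most $n$-dimensional space.

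For Eq.~(\ref{Eq:simfinite})~$\Rightarrow$~Definition~\ref{defsimulability} I would simply re-index the existing data as a pointwise decomposition over a discrete probability space. Writing each instrument operation in Kraus form $\hE_\lambda(\rho)=\sum_i K_{\lambda,i}\rho K_{\lambda,i}^*$ with $K_{\lambda,i}:\C^d\to\C^n$ (finitely many $i$, since $\dim<\infty$), Eq.~(\ref{Eq:simfinite}) reads $M_{a|x}=\sum_{\lambda,i}K_{\lambda,i}^*N_{a|x,\lambda}K_{\lambda,i}$. I then take $\Omega$ to be the countable index set $\{(\lambda,i)\}$ with any strictly positive probability weights $c_{\lambda,i}$, set $K'_{(\lambda,i)}:=c_{\lambda,i}^{-1/2}K_{\lambda,i}$ and $N_{x,(\lambda,i)}:=N_{x,\lambda}$, and observe that $\int K'^{*}_{(\lambda,i)}N_{a|x,(\lambda,i)}K'_{(\lambda,i)}\,d\mu=\sum_{\lambda,i}K_{\lambda,i}^*N_{a|x,\lambda}K_{\lambda,i}=M_{a|x}$. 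Measurability is vacuous on a countable space and $\mathrm{rank}(K'_{(\lambda,i)})\le n$ holds automatically because the $K_{\lambda,i}$ map into $\C^n$, so Definition~\ref{defsimulability} is satisfied.

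The substantive direction is Definition~\ref{defsimulability}~$\Rightarrow$~Eq.~(\ref{Eq:simfinite}), where a genuinely continuous (possibly atomless) integral must be turned into a finite instrument. Discarding a $\mu$-null set, I assume $\mathrm{rank}(K_\lambda)\le n$ everywhere and set $g_{a,x}(\lambda):=K_\lambda^*N_{a|x,\lambda}K_\lambda\ge0$; since each $N_{x,\lambda}$ is normalized, $\sum_a g_{a,x}(\lambda)=K_\lambda^*K_\lambda$ is independent of $x$, and $M_{a|x}=\int_\Omega g_{a,x}\,d\mu$. The map $\lambda\mapsto\big(g_{a,x}(\lambda)\big)_{a,x}$ takes values in the finite-dimensional real vector space of tuples of Hermitian operators on $\C^d$ and is $\mu$-integrable (its diagonal entries integrate to those of $M_{a|x}$, and the off-diagonal ones are controlled by Cauchy--Schwarz). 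I would then invoke the classical discretization theorem for vector-valued integrals (Richter's theorem): there are finitely many points $\lambda_1,\dots,\lambda_m\in\Omega$ and weights $\alpha_j\ge0$, $\sum_j\alpha_j=1$, with $M_{a|x}=\sum_{j}\alpha_j\,g_{a,x}(\lambda_j)$ simultaneously for every pair $(a,x)$.

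It then remains to package this finite sum as a genuine instrument into $\C^n$. For each $j$ I pick an isometry $V_j$ from $\mathrm{ran}(K_{\lambda_j})$ (dimension $\le n$) into $\C^n$, set $\hat K_j:=\sqrt{\alpha_j}\,V_jK_{\lambda_j}:\C^d\to\C^n$ and $\hE_j(\rho):=\hat K_j\rho\hat K_j^*$, and compress $N_{a|x,\lambda_j}$ to a POVM $\hat N_{a|x,j}$ on $\C^n$ (only its action on $\mathrm{ran}(K_{\lambda_j})$ enters, and the deficit $\id-V_jV_j^*$ can be dumped into a single outcome without altering the statistics). Summing $M_{a|x}=\sum_j\alpha_j g_{a,x}(\lambda_j)$ over $a$ yields $\sum_j\hat K_j^*\hat K_j=\sum_j\alpha_j K_{\lambda_j}^*K_{\lambda_j}=\sum_a M_{a|x}=\id$, so $\{\hE_j\}_j$ is a valid instrument, and $\sum_j\tr{\hE_j(\rho)\hat N_{a|x,j}}=\tr{\rho M_{a|x}}$ is exactly Eq.~(\ref{Eq:simfinite}). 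The main obstacle is precisely this reverse direction, namely converting a continuous decomposition into finitely many outcomes while preserving both the rank-$n$ constraint and the normalization $\sum_j\hat K_j^*\hat K_j=\id$; Richter's theorem supplies the finite support, and the observation that $\sum_a g_{a,x}$ equals the fixed operator $K_\lambda^*K_\lambda$ is what makes the normalization survive the discretization.
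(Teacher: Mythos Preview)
Your argument is correct. The easy direction is handled exactly as one would expect, and for the hard direction your appeal to Richter's theorem is the right tool: the tuple $\lambda\mapsto(g_{a,x}(\lambda))_{a,x}$ lands in a finite-dimensional real vector space and is integrable, so Richter--Rogosinski yields a finitely supported probability measure with the same barycenter. Your normalization check $\sum_j\alpha_jK_{\lambda_j}^*K_{\lambda_j}=\id$ via $\sum_a g_{a,x}=K_\lambda^*K_\lambda$ is the key observation that makes the discrete data assemble into an honest instrument, and the isometric embedding of each $\mathrm{ran}(K_{\lambda_j})$ into $\C^n$ together with the harmless dumping of the defect projection $\id-V_jV_j^*$ into one outcome is clean.

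This is, however, a genuinely different route from the paper's. The paper does not argue directly on the measurement side at all: it defers the proof to Proposition~\ref{thmnprepconsistency}, first passing through the $n$-simulability/$n$-preparability correspondence (conjugating by $\sigma^{1/2}$ for a faithful full-rank $\sigma$), and then showing that the set $\mathcal{SA}_n$ of discretely $n$-preparable assemblages is compact and convex in $\bigoplus_{a,x}\li(\C^d)$, so that the barycenter of any probability measure supported in $\mathcal{SA}_n$ already lies in $\mathcal{SA}_n$. Translating back via the known finite-dimensional equivalence of \cite{jones2022equivalence} then gives Eq.~(\ref{Eq:simfinite}). Conceptually the two arguments rest on the same convexity phenomenon---an integral against a probability measure into a finite-dimensional space is a finite convex combination of values---but the paper packages it as a compactness/barycenter statement on the steering side, while you invoke Richter directly on the measurement side. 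Your approach is shorter and self-contained; the paper's has the advantage that it simultaneously yields the consistency of $n$-preparability (Proposition~\ref{thmnprepconsistency}), which is of independent interest for the steering application.
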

\begin{proof}
    We postpone the proof by noting that it can be seen as a consequence of Proposition \ref{thmnprepconsistency}.
\end{proof}

Finally we show as an example that the spectral measures of the canonical pair of position and momentum are not $n$-simulable for any finite $n$, i.e.\ position and momentum are not \emph{finitely simulable}. The spectral measure of the momentum operator, $P$ is related to position via the Fourier-Plancherel operator $F$, i.e.\ the extension of the Fourier transform to $L^2(\R)$, with the formula $P(X)=F^*Q(X)F$ for all $X \in \hB(\R)$. 
\begin{example} \label{thmQPsimulable}
    The set $\{Q,P\}$ is not finitely simulable.
    Let us assume that in some dense domain $\hD \subseteq \hi$ the following two equations hold in the weak sense for all $Y \in \hB(\R)$:
\begin{align}
Q(Y)&=\int_{\Omega} K_{\lambda}^* N_{1,\lambda}(Y) K_{\lambda} \, d\mu(\lambda), \\
P(Y)&=\int_{\Omega} K_{\lambda}^* N_{2,\lambda}(Y) K_{\lambda} \, d\mu(\lambda).
\end{align}
We (densely) define the following POVM on the domain $\hA$ of $\mu$:
\begin{align}
E(Z):=\int_{Z} K_{\lambda}^* K_{\lambda} \, d\mu(\lambda).
\end{align}
Then we have a densely defined joint observable $G_1$ for $Q$ and $E$ defined by
\begin{align}
G_1(Y \times Z):=\int_{Z}  K_{\lambda}^* N_{1,\lambda}(Y) K_{\lambda} \, d\mu(\lambda).
\end{align}
Therefore $Q$ and $E$ are jointly measurable and as $Q$ is a spectral measure, they must also commute. Similarily, with the joint observable 
\begin{align}
G_2(Y \times Z):=\int_{Z}  K_{\lambda}^* N_{2,\lambda}(Y) K_{\lambda} \, d\mu(\lambda)
\end{align}
we deduce that $E$ and $P$ must commute. Therefore $E$ commutes with the generators of the Weyl group, which is defined by $W(q,p):=e^{iqp/2}U(q)V(p)$, where $(U(q)\fii)(x)=\fii(x-q)$ and $(V(p)\fii)(x)=e^{ipx}\fii(x)$, $\fii\in L^2(\R)$. Therefore $E$ commutes with the entire Weyl group, which is irreducible, so $E(Z)=p(Z)\id$, $Z\in\hA$, for some probability measure $p$. \par 
 Let then $0\neq \fii \in \hD$ be an arbitrary vector. Then
\begin{align}
p(Z)\Vert \fii \Vert^2=\int_{Z} \Vert K_{\lambda}\fii \Vert^2 \, d\mu(\lambda).
\end{align}
Thus for any two vectors $\fii,\psi \in \hD$ and for all sets $Z \in \hA$ we get that 
\begin{align}
\int_{Z} \frac{\Vert K_{\lambda}\fii \Vert^2}{\Vert \fii \Vert^2} \, d\mu(\lambda)=\int_{Z} \frac{\Vert K_{\lambda}\psi \Vert^2}{\Vert \psi \Vert^2} \, d\mu(\lambda). \label{QP1}
\end{align}
Let then $\hB:=\{\fii_n\}_{n \in \N}$ be an orthonormal basis inside $\hD$. Define $V$ as the countable set of complex rational linear combinations of $\hB$. Let us use some numbering of $V=\{\psi_1,\psi_2,\dots \}$. Then if $m,n \in \N$, we have by equation (\ref{QP1}) that $\frac{\Vert K_{\lambda}\psi_m \Vert}{\Vert \psi_m \Vert}=\frac{\Vert K_{\lambda}\psi_n \Vert}{\Vert \psi_n\Vert}$ for all $\lambda \in \Omega\setminus N_{m,n}$ for some $N_{m,n}$ with $\mu(N_{m,n})=0$. Then also for the set $N:=\bigcup_{m,n=1}^{\infty} N_{m,n}$ we have $\mu(N)=0$. Therefore, for almost every $\lambda$ the equalities $\frac{\Vert K_{\lambda}\psi_m \Vert}{\Vert \psi_m \Vert}=\frac{\Vert K_{\lambda}\psi_n \Vert}{\Vert \psi_n\Vert} $ hold for all $m,n \in \N$. Thus for a fixed $\lambda \in \Omega \setminus N$ we have, for all $n \in \N$, 
\begin{align}
    \Vert K_{\lambda}\psi_n \Vert=A_{\lambda} \Vert \psi_n \Vert  \label{QP2}
\end{align}
where $A_{\lambda}\ge 0$ is a constant. Since $\fii_n,\fii_m \in \hB\subseteq V$ we get,  by the polarization identity and equation (\ref{QP2}), the following for almost every $\lambda \in \Omega$:
\begin{align}
    \ip{K_{\lambda}\fii_n}{K_{\lambda}\fii_m}=\frac{1}{4}\sum_{k=0}^3 i^k \Vert K_{\lambda} (\fii_m+i^k \fii_n) \Vert^2 =\frac{A_{\lambda}^2}{4}\sum_{k=0}^3 i^k \Vert  \fii_m+i^k \fii_n \Vert^2=A_{\lambda}^2\ip{\fii_n}{\fii_m}=A_{\lambda}^2 \delta_{nm}
\end{align}
Hence, for almost every $\lambda \in \Omega$ the image $ K_{\lambda}(\hD)$ contains an infinite orthogonal set (or $K_\lambda=0$). Thus for almost every $K_{\lambda}$ we have $\mathrm{rank}(K_{\lambda})=\infty$ (or 0).
\end{example}

\section{Application to quantum steering}

We now study the implications of the results concerning simulability on the concept of \emph{quantum steering}. In quantum steering, the objects of interest are called \emph{state assemblages}. To define these, we work in outcome spaces $(\Omega_x,\hA_x)$, where $x$ denotes a setting. Now a state assemblage is a collection $\{\sigma_x\}_x$ of positive trace-class valued measures $\sigma_x:\hA_x \to \th$ with the nonsignalling condition $\sigma_x(\Omega_x)=\sigma \in \sh$ for all $x$. With discrete $\sigma$-algebras these can be represented by collections $\{\sigma_{a|x}\}_{a,x}$, where $a$ denotes an outcome, with $\sum_{a} \sigma_{a|x}=\sigma$. Furthermore, for every state assemblage there exists a set of POVMs $\{A_x\}$ and a state $\rho \in \hS(\hi_A \otimes \hi_B)$ such that $\sigma_x(X_x)=\mathrm{tr}_A[(A_x(X_x) \otimes \id )\rho]$ for all $x$ and $X_x \in \hA_x$ i.e.\ every state assemblage can be prepared with a state $\rho$ and measurements $A_x$ \cite{gisin89,hughston93}. More concretely, we have two parties, Alice and Bob (with the corresponding Hilbert spaces $\hi_A$ and $\hi_B$ respectively), sharing a state $\rho$. Alice performs a local measurement $A_x$, based on Bob's classical communication of the setting $x$, on the state and communicates the result to Bob. Consequently, Bob's subsystem can be described by an unnormalized state $\sigma_x(X_x)=\mathrm{tr}_A[(A_x(X_x) \otimes \id )\rho]$. Here $\mathrm{tr}_A$ denotes the partial trace over the Hilbert space $\hi_A$. \par 
A state assemblage $\{\sigma_x\}_x$ acting on the Hilbert space $\hi$, defined in the measurable spaces $(\Omega_x,\hA_x)$ is deemed \emph{unsteerable}, when it admits a local hidden state (LHS) model. Formally this means that there exists a positive trace-class valued measure $T$, defined on the measurable space $(\Omega',\hA')$ such that $T(\Omega') \in \sh$, and post-processings $p(\cdot |x, \cdot):\hA_x \times \Omega' \to [0,1]$ such that for all $x$ and $X_x \in \hA_x$ we have 
\begin{align}
    \sigma_x(X_x)=\int_{\Omega'} p(X_x|x,\lambda) \, dT(\lambda).
\end{align}

A state $\rho \in \hS(\hi \otimes \ki)$ is called separable, if $\rho \in \overline{\mathrm{conv}(\{\kb{\fii}{\fii} \otimes \kb{\psi}{\psi} \, \, | \, \, \fii \in \hi, \, \psi \in \ki\}}$ i.e.\ it belongs to the trace-norm closure of the convex hull of pure product states. 
In the discrete case it has been proven that the existence of a LHS model for a given assemblage is equivalent to preparability by of this assemblage by some separable state \cite{kogias15b, moroder16}. Remarkably, however, this notion of preparability for unsteerable assemblages is more complex for the case of continuous variable steering in infinite dimensions: there are unsteerable state assemblages that need to be prepared with a non-separable state. This essentially follows from the example after Theorem \ref{thmseparability}, since every unsteerable assemblage $\{\sigma_x\}$ with the total state $\sigma$ corresponds to a jointly measurable set of POVMs $\{M_x\}$ such that $\sigma_x(X_x)=\sigma^{1/2}M_x(X_x)\sigma^{1/2}$. This correspondence is proven in the discrete case in \cite{uola15} and discussed in the continuous variable case in \cite{kiukas17}. The result is formalized below.
\begin{theorem}\label{thmnonseparable}
    Let $Q_1=Q_2=Q$ be the spectral measure of the position operator. Define the unsteerable state assemblage $\{\sigma_1,\sigma_2\}$, with $\sigma_x(X):=\sigma^{1/2}Q_x(X) \sigma^{1/2} $ for all $X \in \bo\R$ and $x=1,2$, where $\sigma$ is a (faithful) full-rank state. Then there exist no combination of a Hilbert space $\hi_A$ (even non-separable), separable state $\rho$ and set of POVMs $\{M_x\}_x$ operating in $\hi_A$ such that the following equation holds for $x \in \{1,2\}$:
    \begin{align}
        \sigma_x( \cdot )=\mathrm{tr}_A[(M_x( \cdot) \otimes \id)\rho].
    \end{align}
\end{theorem}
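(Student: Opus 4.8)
The plan is to argue by contradiction, reducing preparability of $\{\sigma_1,\sigma_2\}$ by a separable state to an entanglement-breaking factorisation of the position spectral measure $Q$, and then to invoke Theorem \ref{thmseparability}. Since the constraint is imposed for each $x$ separately, a single setting suffices: suppose there were a Hilbert space $\hi_A$, a separable state $\rho\in\hS(\hi_A\otimes L^2(\R))$ and a POVM $M_1$ on $\hi_A$ with $\sigma^{1/2}Q(X)\sigma^{1/2}=\sigma_1(X)=\mathrm{tr}_A[(M_1(X)\otimes\id)\rho]$ for all $X$. Being trace class, $\rho$ has separable support; as it is moreover a separable state, Theorem 2 of \cite{holevo2005separability} (used exactly as in Proposition \ref{thmnpeb}) provides an integral decomposition $\rho=\int \kb{\alpha_\lambda}{\alpha_\lambda}\otimes\kb{\beta_\lambda}{\beta_\lambda}\,d\mu(\lambda)$ into pure product states. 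Taking the partial trace over $\hi_A$ gives $\sigma_1(X)=\int \tr{M_1(X)\kb{\alpha_\lambda}{\alpha_\lambda}}\,\kb{\beta_\lambda}{\beta_\lambda}\,d\mu(\lambda)$.

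Next I would strip off $\sigma$. As $\sigma$ is faithful and full-rank, sandwiching the previous identity with $\sigma^{-1/2}$ on a common dense domain yields, weakly,
\begin{align*}
    Q(X)=\int_{\Omega} p(X|\lambda)\,\sigma^{-1/2}\kb{\beta_\lambda}{\beta_\lambda}\sigma^{-1/2}\,d\mu(\lambda),\qquad p(X|\lambda):=\tr{M_1(X)\kb{\alpha_\lambda}{\alpha_\lambda}}.
\end{align*}
Because $\int \kb{\beta_\lambda}{\beta_\lambda}\,d\mu(\lambda)=\mathrm{tr}_A[\rho]=\sigma$, the recipe $G(Z):=\int_Z \sigma^{-1/2}\kb{\beta_\lambda}{\beta_\lambda}\sigma^{-1/2}\,d\mu(\lambda)$ defines (densely) a POVM on $L^2(\R)$ with $G(\Omega)=\id$, while $p(\cdot|\lambda)$ is a Markov kernel. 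The displayed identity then reads $Q(X)=\Lambda^*(M_1(X))$, where $\Lambda$ is the measure-and-prepare channel that measures $G$ and outputs $\kb{\alpha_\lambda}{\alpha_\lambda}$; such a channel is entanglement breaking, again by Theorem 2 of \cite{holevo2005separability}.

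The crucial point — and the reason it is the separability of the state, not of $\hi_A$, that matters — is that the target space can be taken separable. The reduced operator $\rho_A=\mathrm{tr}_B[\rho]=\int \kb{\alpha_\lambda}{\alpha_\lambda}\,d\mu(\lambda)$ is a density operator, so $\hi_A':=\overline{\mathrm{ran}\,\rho_A}$ is a separable subspace, and $\tr{\rho_A(\id-P')}=0$ forces $\alpha_\lambda\in\hi_A'$ for $\mu$-almost every $\lambda$, with $P'$ the projection onto $\hi_A'$. Hence replacing $M_1$ by its compression $\tilde M_1(\cdot):=P'M_1(\cdot)P'$, a genuine POVM on the separable space $\hi_A'$, leaves $p(X|\lambda)$ unchanged, while the output states $\kb{\alpha_\lambda}{\alpha_\lambda}$ already lie in $\hS(\hi_A')$. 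Thus $Q(X)=\Lambda^*(\tilde M_1(X))$ with $\Lambda$ an EB-channel into the separable Hilbert space $\hi_A'$. Since $Q$ is a continuous (empty discrete spectrum), projection-valued, hence extremal POVM, Theorem \ref{thmseparability} forbids precisely such a factorisation with separable target — the desired contradiction.

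I expect the main obstacle to be the rigorous handling of the unbounded factor $\sigma^{-1/2}$: one must verify that $G$ is well defined and normalised weakly on a common dense domain, and that after compression to $\hi_A'$ the construction lands honestly within the bounded-POVM, EB-channel hypotheses of Theorem \ref{thmseparability}. This is exactly where the densely-defined weak-integral conventions of Section 3 carry the weight, the remaining ingredients (the Holevo decomposition of $\rho$ and the separability of $\overline{\mathrm{ran}\,\rho_A}$) being standard. A secondary point worth stating is that a single setting already produces the contradiction, so the joint measurability of the pair $\{Q_1,Q_2\}$ is needed only to ensure that the assemblage is unsteerable, not for the non-preparability argument itself.
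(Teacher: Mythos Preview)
Your overall strategy matches the paper's: reduce the hypothetical separable-state preparation to a factorisation $Q(X)=\Lambda^*(N(X))$ through an entanglement-breaking channel into a \emph{separable} target space, then invoke Theorem~\ref{thmseparability}. The implementation differs in one substantive step. The paper first proves a reduction lemma (projecting $\rho$ onto the separable subspace spanned by the $\hi_A$-components of its eigenvectors and checking that separability survives the projection), and only then pulls out the EB-channel abstractly via the Choi--Jamio{\l}kowski correspondence for EB-channels, obtaining $Q_x=\Lambda^*\circ M_x$ directly from $\rho=(\Lambda\otimes\id)\kb{\psi_\sigma}{\psi_\sigma}$. You instead decompose $\rho$ explicitly by Holevo's integral representation, take the $A$-partial trace, strip $\sigma^{1/2}$ densely, and assemble a concrete measure-and-prepare channel (measure the rank-1 POVM $G$, prepare $\kb{\alpha_\lambda}{\alpha_\lambda}$). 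Your construction is more explicit and makes the observation---not stressed in the paper---that a single setting already yields the contradiction; the paper's CJ route is shorter once the isomorphism is in hand and avoids handling $\sigma^{-1/2}$ inside an integral.

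One point to tighten: you invoke Holevo's barycentric decomposition of the separable state \emph{before} reducing to a separable $\hi_A$, justifying this by ``$\rho$ has separable support''. The separable support of $\rho$ need not be a tensor product subspace, and Holevo's result is formulated for separable bipartite spaces; what you actually need is that $\rho$, viewed in $\hS(\hi_A'\otimes L^2(\R))$ with $\hi_A'=\overline{\mathrm{ran}\,\rho_A}$, is still separable \emph{there}. This is exactly the content of the paper's Lemma~1 (project the approximating product-state combinations by $P'\otimes\id$, renormalise, and pass to the limit). Your third paragraph contains all the ingredients, but the order should be: first project to $\hi_A'$ and verify separability persists, \emph{then} apply Holevo. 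With that reordering the argument is complete.
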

\begin{proof}
    We begin the proof by first proving a lemma. 
    \begin{lemma}
        Let $\hi_B$ be a separable Hilbert space and $\{\sigma_x\}_x$ a state assemblage such that $\sigma_x(X_x) \in \hT(\hi_B)$ for all $x$ and $X_x$. Assume that $\{\sigma_x\}_x$ can be prepared with a separable state $\rho \in \hS(\hi_A \otimes \hi_B)$ and a POVM acting in $\hi_A$, where $\hi_A$ is an arbitrary (possibly non-separable) Hilbert space. Then $\{\sigma_x\}_x$ can also be prepared with a separable state $\rho_0 \in \hS(\hi_{A0} \otimes \hi_B)$ and a POVM acting in $\hi_{A0}$, where $\hi_{A0}$ is a separable Hilbert space.
    \end{lemma}
    \emph{Proof of Lemma 1.} 
    We now have that for some POVMs $N_x$ and separable state $\rho$ the following holds for all $x$ and $X_x$:
    \begin{align}
        \sigma_x(X_x)=\mathrm{tr}_A[(N_x(X_x) \otimes \id)\rho].
    \end{align}
    Now by the spectral theorem \cite[Theorem 3.5]{busch16}, we have 
    \begin{align}
        \rho=\sum_{k} \lambda_k \kb{\fii_k}{\fii_k}
    \end{align}
    where $\{\fii_k\}$ is a countable orthonormal set, $\lambda_k\ge 0$, and $\sum_k\lambda_k=1$.
    Let us fix an orthonormal basis $\{\psi_i\}_{i \in I}$ for $\hi_A$, where $I$ is some index set, and a basis $\{\eta_n\}_{n=1}^{\dim\hi_B}$ for $\hi_B$. Then $\{\psi_i \otimes \eta_n\}_{i,n}$  is an orthonormal basis for $\hi_A \otimes \hi_B$. Therefore 
    \begin{align}
        \fii_k=\sum_{i \in C_k, n} a_{kin} \ket{\psi_i \otimes \eta_n},\qquad a_{kin}\in\C,\quad\sum_{i \in C_k, n} |a_{kin}|^2=1.
    \end{align}
    Here $C_k \subseteq I$ is a countable set for all $k$, since basis representations have to be summable. With this we get a new representation for $\rho$ (for example in the weak operator topology). 
    \begin{align}
        \rho=\sum_{k}\sum_{i,j \in C_k}\sum_{n,m} \lambda_k \overline{a_{kin}}a_{kjm} \kb{\psi_i \otimes \eta_n}{\psi_j \otimes \eta_m}.
    \end{align}
    Let us then define $\Tilde{C}_1:=C_1$ and $\Tilde{C}_k:=C_k\setminus \bigcup_{j=1}^{k-1}{C_j}$ for $k\geq 2$. Using this we (weakly) define the projection $P=\sum_{k}\sum_{i \in \Tilde{C}_k} \kb{\psi_i}{\psi_i}$. Then one can easily check that $(P \otimes \id)\rho(P \otimes \id)=\rho$. \\
    Next we define $\hi_{A0}:= P(\hi_A)$. This is a separable Hilbert space, since it has the countable orthonormal basis $\{\psi_i\}_{i \in \bigcup_{k=1}^{\infty} \Tilde{C_k}}$. Also obviously for all $x$ we have that $N_x^0:=PN_xP$ are POVMs operating in $\hi_{A0}$. Let then $B \in \li(\hi_B)$ be arbitrary. Now
    \begin{align}
        \mathrm{tr}[B\sigma_x(X_x)]&=\mathrm{tr}_{\hi_B}[B\mathrm{tr}_A[(N_x(X_x) \otimes \id) \rho]]=\mathrm{tr}_{\hi_A \otimes \hi_B}[(P \otimes \id)(P \otimes B)(PN_x(X_x)P \otimes \id) \rho (P \otimes \id)] \\
        &=\mathrm{tr}_{\hi_{A0} \otimes \hi_B}[(\id_{\hi_{A0}} \otimes B)(N_x^0(X_x) \otimes \id) \rho ] =\mathrm{tr}[B \mathrm{tr}_{A0}[(N_x^0(X_x) \otimes \id) \rho]].
    \end{align}
    Therefore $\{\sigma_x\}_x$ can indeed be prepared using a separable Hilbert space. All that is left to check is that the projection by $P$ preserves separability.  \par 
    By separability of $\rho$, there exists a sequence $(\rho_n)_{n \in \N}$ of convex combinations of product states such that $\lim_{n \to \infty} \Vert \rho -\rho_n \Vert_1=0$. Here $\Vert \cdot \Vert_1$ denotes the trace norm. Since  $(P \otimes \id)\rho (P \otimes \id)=\rho$, we easily see that
    \begin{align}
        \frac{(P \otimes \id) \rho_n (P \otimes \id)}{\tr{(P \otimes \id) \rho_n (P \otimes \id)}} \to \rho
    \end{align}
    in the trace norm of $\hT(\hi_A \otimes \hi_B)$. We may assume without loss of generality that for no $n \in \N$ we have $(P \otimes \id) \rho_n (P \otimes \id)=0$. This is since if some of these are zero, then we may consider the subsequence that contain no zeroes, since there can only be finitely many $(P \otimes \id) \rho_n (P \otimes \id)$ that are zero. This in turn follows from the fact that the sequence $((P \otimes \id) \rho_n (P \otimes \id))_n$ converges to a nonzero operator. \par   The states $\rho_{n0}:=\frac{(P \otimes \id) \rho_n (P \otimes \id)}{\tr{(P \otimes \id) \rho_n (P \otimes \id)}}$ are still obviously convex combinations of product states and also converge to $\rho$ in the trace norm of $\hT(\hi_{A0} \otimes \hi_B)$. Thus $\rho$ is separable in $\hS(\hi_{A0} \otimes \hi_B)$, which finishes the proof.\qed \\

    Let us then move on to the proof of Theorem \ref{thmnonseparable}. Assume that $\{\sigma_x\}_{x=1,2}$ is as in the statement of the theorem. Aiming for a contradiction, assume that $\{\sigma_x\}_x$ can be prepared with a separable state $\Tilde{\rho} \in \hS(\hi_A \otimes L^2(\R))$. By Lemma 1 we can assume that $\hi_A$ is separable and
    \begin{align}
        \sigma^{1/2}Q_x(X) \sigma^{1/2}=\mathrm{tr}_A[(M_x( X) \otimes \id)\Tilde{\rho}]. \label{nonsep1}
    \end{align}
    If (\ref{nonsep1}) holds, then there is also another separable state  $\rho $ such that $\sigma^{1/2}Q_x(X) \sigma^{1/2}=\mathrm{tr}_A[(M_x( X) \otimes \id)\rho]^{\rm T}$, where the transpose is in the eigenbasis of $\sigma$.
    Let $\sigma=\sum_{k=1}^{\infty} p_k \kb{k}{k}$ be the spectral representation of $\sigma$ and $\psi:=\sum_{k=0}^{\infty} \sqrt{p_k}\ket{k}\otimes\ket{k}$ its purification. 
    As $\rho$ is a state in a separable bipartite space, we can use the Choi-Jamiolkowski isomorphism for EB-channels \cite[Corollary 1]{shirokov2013schmidt} to deduce that $\rho=\Lambda \otimes \id( \kb{\psi}{\psi})$ for some EB-channel $\Lambda$. Let $B \in \li(L^2(\R))$ be arbitrary. Then we have
    \begin{align}
        \tr{B\sigma^{1/2}Q_x(X) \sigma^{1/2}}&=\tr{B\mathrm{tr}_A[(M_x( X) \otimes \id)\Lambda \otimes \id(\kb{\psi}{\psi})]^T}=\tr{B^T\mathrm{tr}_A[(M_x( X) \otimes \id)\Lambda \otimes \id(\kb{\psi}{\psi})]} \\
        &=\tr{(\Lambda^*(M_x( X)) \otimes B^T)\kb{\psi}{\psi}]} 
        = \tr{B \sigma^{1/2}\Lambda^*(M_x( X)) \sigma^{1/2}}.
    \end{align}
    Since the range of $\sigma^{1/2}$ is dense, we have that $Q_x(X)=\Lambda^*(M_x( X))$. But according to  the example after Theorem \ref{thmseparability}, this implies that $\hi_A$ cannot be separable, which is a contradiction. Hence, $\{\sigma_x\}_x$ cannot be prepared with a separable state.
\end{proof}

This shows that the claim made about preparability of unsteerable state assemblages with separable states in the continuous-variable setting, given in \cite{kogias15b}, needs to be slightly adjusted. Indeed, we prove in the following that to describe an equivalent condition to unsteerability, we need to add ``continuous mixing'' to the notion of preparability with separable states i.e.\ a barycenter. This result is to be expected in light of Proposition \ref{thmjointmeas} and the parallelism of joint measurability and unsteerability. Consequently the proof is also very similar. This is formalized below.

\begin{proposition}\label{thmunsteerable}
    A state assemblage $\{\sigma_x\}_x$ is unsteerable if and only if there is a probability space $(\Omega',\hA',\mu)$, and a weakly measurable collection of state assemblages $\{\sigma_x^{\lambda}\}_{x,\lambda}$, preparable with separable states, such that, for all $x$ and $X_x \in \hA_x$, 
    \begin{align}
        \sigma_x(X_x)=\int_{\Omega'} \sigma_x^{\lambda}(X_x) \, d\mu(\lambda). \label{US1}
    \end{align}
\end{proposition}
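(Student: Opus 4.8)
The plan is to mirror the proof of Proposition \ref{thmjointmeas}, exploiting the dictionary between state assemblages and POVMs. Throughout, let $\sigma:=\sigma_x(\Omega_x)$ be the total state and work on $\hi_B^0:=\overline{\mathrm{ran}}\,\sigma$, on which $\sigma$ is faithful; since $\sigma_x(X_x)\le\sigma$ for every $X_x$, all the $\sigma_x(X_x)$ are supported in $\hi_B^0$, so no generality is lost. The key observation is that a block built from a fixed pure state, $\sigma_x^{\lambda}(X_x)=p(X_x|x,\lambda)\,\kb{\hat g(\lambda)}{\hat g(\lambda)}$ with $p(\cdot|x,\lambda)$ a Markov kernel, is preparable with the product (hence separable) state $1\otimes\kb{\hat g(\lambda)}{\hat g(\lambda)}$ on $\C\otimes\hi_B^0$ together with measurements $A_x(X_x)=p(X_x|x,\lambda)\,1$; these pure-state assemblages are the building blocks of the barycenter.

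For the ``only if'' direction I would start from an LHS model $\sigma_x(X_x)=\int_{\Omega'}p(X_x|x,\lambda)\,dT(\lambda)$ with $T(\Omega')=\sigma$. Since $T(X)\le\sigma$, the measure $T$ is valued in $\hT(\hi_B^0)$, and $G(X):=\sigma^{-1/2}T(X)\sigma^{-1/2}$ defines a POVM on $\hi_B^0$ with $G(\Omega')=\id$ and $\sigma_x(X_x)=\sigma^{1/2}\big(\int p(X_x|x,\lambda)\,dG(\lambda)\big)\sigma^{1/2}$. Exactly as in Proposition \ref{thmjointmeas} I would replace $G$ by its rank-$1$ refinement on a probability space $(\Omega,\hA,\mu)$, for which Theorem 1 of \cite{Pello5} supplies a weakly measurable family of functionals $\bra{d(\lambda)}$ with $\int_{\Omega}\kb{d(\lambda)}{d(\lambda)}\,d\mu(\lambda)=\id$, absorbing the refinement post-processing into the kernel $p$. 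Setting $\ket{g(\lambda)}:=\sigma^{1/2}\ket{d(\lambda)}$ then yields $\sigma_x(X_x)=\int_{\Omega}p(X_x|x,\lambda)\,\kb{g(\lambda)}{g(\lambda)}\,d\mu(\lambda)$. Finally I would renormalise: since $\int\kb{g(\lambda)}{g(\lambda)}\,d\mu=\sigma^{1/2}\id\,\sigma^{1/2}=\sigma$, the measure $d\mu'(\lambda):=\|g(\lambda)\|^2\,d\mu(\lambda)$ is a probability measure, and with $\hat g(\lambda):=g(\lambda)/\|g(\lambda)\|$ the normalised blocks $\sigma_x^{\lambda}(X_x):=p(X_x|x,\lambda)\kb{\hat g(\lambda)}{\hat g(\lambda)}$ are separable-preparable as above and satisfy $\sigma_x(X_x)=\int_{\Omega}\sigma_x^{\lambda}(X_x)\,d\mu'(\lambda)$.

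For the ``if'' direction I would first show that each separable-preparable block is itself unsteerable: writing a separable preparing state as a barycenter $\int\kb{a}{a}\otimes\kb{b}{b}\,d\nu(a,b)$ and applying Theorem 2 of \cite{holevo2005separability} (the same EBC-preparable $\Rightarrow$ jointly measurable input used earlier in the text) produces an explicit LHS model $\sigma_x^{\lambda}(X_x)=\int p^{\lambda}(X_x|x,\cdot)\,dT^{\lambda}$ for $\{\sigma_x^{\lambda}\}_x$. I would then glue these models: integrating over $\lambda$ and combining the base variable $\lambda$ with the fibre variable into a single standard Borel hidden-variable space gives one trace-class measure and one Markov kernel reproducing $\sigma_x=\int\sigma_x^{\lambda}\,d\mu$, i.e.\ an LHS model for $\{\sigma_x\}_x$, which is therefore unsteerable.

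I expect the main obstacle to be measurability bookkeeping rather than any conceptual difficulty: in the ``only if'' direction, transporting the rank-$1$ refinement of $G$ through $\sigma^{1/2}$ while keeping $\lambda\mapsto g(\lambda)$ weakly measurable and verifying that $\mu'$ is a genuine probability measure, and in the ``if'' direction, choosing the local models $(T^{\lambda},p^{\lambda})$ measurably in $\lambda$ and assembling them into a single model on a combined standard Borel space. The reduction to $\hi_B^0$ is precisely what makes $\sigma^{-1/2}$ and the whole dictionary legitimate for non-full-rank $\sigma$.
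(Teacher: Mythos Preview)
Your proposal is correct and follows essentially the same path as the paper's proof. The only notable streamlining the paper makes is in the ``if'' direction: rather than producing an LHS model for each separable-preparable block and then gluing, it assumes from the outset (absorbing the barycentric decomposition of each separable preparing state into the $\lambda$-integral) that every $\sigma_x^{\lambda}$ is prepared with a \emph{pure} product state $\kb{\psi_\lambda}{\psi_\lambda}\otimes\kb{\fii_\lambda}{\fii_\lambda}$, after which the LHS model $p(X_x|x,\lambda)=\ip{\psi_\lambda}{N_{x,\lambda}(X_x)|\psi_\lambda}$, $T(X)=\int_X\kb{\fii_\lambda}{\fii_\lambda}\,d\mu(\lambda)$ is immediate and no separate gluing step is needed.
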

\begin{proof}
    ($\Rightarrow$) By unsteerability we have 
    \begin{align}
        \sigma_x(X_x)=\int_{\Omega'} p(X_x|x,\lambda) \, dT(\lambda).
    \end{align}
    As in the proof of Proposition \ref{thmjointmeas}, we can assume that $T$ is rank-1, by replacing it with its rank-1 refinement if necessary. Then 
    \begin{align}
        T(X)=\int_{X} \kb{\fii_{\lambda}}{\fii_{\lambda}} \, d\mu(\lambda)
    \end{align}
    for some measurable family $\{\fii_{\lambda}\}_{\lambda \in \Omega'} \subseteq \hi$ and a probability measure $\mu$. Thus we have 
    \begin{align}
        \sigma_x(X_x)=\int_{\Omega'} p(X_x|x,\lambda) \kb{\fii_{\lambda}}{\fii_{\lambda}} \, d\mu(\lambda)=\int_{\Omega'} p(X_x|x,\lambda) \frac{1}{\tr{\kb{\fii_{\lambda}}{\fii_{\lambda}}}}\kb{\fii_{\lambda}}{\fii_{\lambda}} \, d\nu(\lambda).
    \end{align}
    Here we defined a new probability measure by $\nu(X):=\int_X\tr{\kb{\fii_{\lambda}}{\fii_{\lambda}}} d\mu(\lambda)$. 
    Now the state assemblages $\sigma_{x}^{\lambda}$ defined with $\sigma_{x}^{\lambda}(X_x):=p(X_x|x,\lambda) \frac{1}{\tr{\kb{\fii_{\lambda}}{\fii_{\lambda}}}}\kb{\fii_{\lambda}}{\fii_{\lambda}}$ are preparable with a separable state since 
    $$\sigma_{x}^{\lambda}(X_x)=\mathrm{tr}_A\left[(p(X_x|x,\lambda)\id \otimes \id)\kb{\psi}{\psi} \otimes \frac{\kb{\fii_{\lambda}}{\fii_{\lambda}}}{\tr{\kb{\fii_{\lambda}}{\fii_{\lambda}}}}\right]$$
    for any unit vector $\psi$ in some Hilbert space $\hi_A$. \\
    ($\Leftarrow$) Assume equation (\ref{US1}) holds. We can assume that the $\sigma_x^{\lambda}$ can be prepared with a pure separable state. Then $\sigma_x^{\lambda}(X_x)=\mathrm{tr}_A[(N_{x,\lambda}(X_x) \otimes \id)\kb{\psi_{\lambda}}{\psi_{\lambda}} \otimes \kb{\fii_{\lambda}}{\fii_{\lambda}}]=\ip{\psi_{\lambda}}{N_{x,\lambda}(X_x)|\psi_{\lambda}}\kb{\fii_{\lambda}}{\fii_{\lambda}}$. Therefore we define $p(X_x|x,\lambda):=\ip{\psi_{\lambda}}{N_{x,\lambda}(X_x)|\psi_{\lambda}}$ for all $x,\,\lambda$ and $X_x$ as well as $T(X):=\int_{X}  \kb{\fii_{\lambda}}{\fii_{\lambda}} \, d\mu(\lambda)$ for all $X$. Plugging these into equation (\ref{US1}) we see that the state assemblage is unsteerable.  
\end{proof}
Now as in the case of simulability, we can use this result to generalize the concept of \emph{$n$-preparability}. This concept was first introduced in \cite{designolle21} in the discrete finite dimensional case. Indeed, a state assemblage $\{\sigma_{a|x}\}_{a,x} \subseteq \li(\C^d)$ is $n$-preparable if and only if there exists a state $\rho \in \hS(\hi_A \otimes \C^d)$ with $\mathrm{SN}(\rho)\leq n\leq d$ and a set of POVMs $\{A_{a|x}\}_{a,x}$ such that $\sigma_{a|x}=\mathrm{tr}_A[(A_{a|x} \otimes \id)\rho]$ for all $a$ and $x$. We thus see that in the finite dimensional discrete case, 1-preparability is exactly unsteerability and consequently we require this same property from any generalization of $n$-preparability, similar to that what we required from any generalization of simulability. Since not all unsteerable states are preparable with a separable state, the concept of $n$-preparability cannot be generalized directly in the case of continuous outcomes and infinite dimensions.  Instead, motivated by Proposition \ref{thmunsteerable}, we propose the following definition for $n$-preparability.

\begin{definition}\label{defpreparability}
    A state assemblage $\{\sigma_x\}_x$ is $n$-preparable if and only if there is a probability space $(\Omega',\hA',\mu)$, and a weakly measurable collection of state assemblages $\{\sigma_x^{\lambda}\}_{x,\lambda}$, preparable with states with Schmidt number at most $n$, such that, for all $x$ and $X_x \in \hA_x$, 
    \begin{align}
        \sigma_x(X_x)=\int_{\Omega'} \sigma_x^{\lambda}(X_x) \, d\mu(\lambda). \label{eqdefprep}
    \end{align}
\end{definition}

In the case of finite dimensional steering with discrete outcomes, the concepts of $n$-preparability and $n$-simulability were shown to be equivalent \cite{jones2022equivalence} in the sense that a state assemblage with a full rank total state is $n$-preparable if and only if the unique set of POVMs corresponding to this assemblage is $n$-simulable. This equivalence also holds for our general definitions for $n$-simulability and $n$-preparability, which is seen as follows. Assume first that the state assemblage $\{\sigma_x\}_x$ is $n$-preparable with the faithful full rank state $\sigma$ and corresponding set of POVMs $\{M_x\}_x$. Then we have 
\begin{align}
    \sigma^{1/2} M_x(X_x) \sigma^{1/2}=\int_{\Omega'} \mathrm{tr}_A[(N_{x,\lambda} (X_x)\otimes \id)\rho_{\lambda}] \, d\mu(\lambda).
\end{align}
Here $\mathrm{SN}(\rho_{\lambda}) \leq n$. Above we may assume the $\rho_{\lambda}$ are pure, as every state with a $\mathrm{SN}(\rho_{\lambda}) \leq n$ can be written as a convex combination of pure states with Schmidt rank at most $n$ and these convex combinations can be absorbed into the integration. Let then $\rho_{\lambda}=\sum_{i,j=1}^n \sqrt{p_{i\lambda}p_{j\lambda}} \kb{\fii_{i \lambda}}{\fii_{j \lambda}} \otimes \kb{\psi_{i \lambda}}{\psi_{j \lambda}}$ be the Schmidt decomposition for a fixed $\lambda$. Defining then the operators $F_{\lambda}=\sum_{i=1}^n \sqrt{p_{i \lambda}}\kb{\fii_{i \lambda}}{\psi_{i \lambda}}$, we find easily that 
\begin{align}
    \sigma^{1/2} M_x(X_x) \sigma^{1/2}=\int_{\Omega'} F_{\lambda}^* (N_{x,\lambda}(X_x))^{{\rm T}_{\lambda}} F_{\lambda} \, d\mu(\lambda).
\end{align}
Here $A^{{\rm T}_{\lambda}}$ means transpose of the operator $A$ in the extension of the partial Schmidt basis $\{\ket{\fii_{i \lambda}}\}_i$. Since $\sigma$ is a full rank faithful state, its range is dense and therefore we can define the densely defined unbounded operators $K_{\lambda}:=F_{\lambda}\sigma^{-1/2}$. We obviously then have $\mathrm{rank}(K_{\lambda})\leq n$ and therefore see that $\{M_x\}_x$ is $n$-simulable. \par 
Conversely, if $\{M_x\}$ is $n$-simulable, then weakly
\begin{align}
    \sigma_x(X_x)=\sigma^{1/2} M_x(X_x) \sigma^{1/2}=\int_{\Omega'} \sigma^{1/2} K_{\lambda}^* N_{x,\lambda}(X_x) K_{\lambda} \sigma^{1/2} \, d\mu(\lambda),
\end{align}
for some possibly unbounded operators $K_{\lambda}$ with $\mathrm{rank}(K_{\lambda})\leq n$ that are chosen in such a way that their common domain contains the eigenvectors of $\sigma$ \cite{Pello5}.
Taking the trace of $\sigma=\int_{\Omega'} \sigma^{1/2} K_{\lambda}^* K_{\lambda} \sigma^{1/2} \, d\mu(\lambda)$ we see that
 almost each $K_{\lambda}\sigma^{1/2}$ extends to a Hilbert-Schmidt operator $F_{\lambda}$. Then as $\mathrm{rank}(F_{\lambda})\leq n$, we can use a decomposition $F_{\lambda}=\sum_{i=1}^n \kb{\fii_{i \lambda}}{\psi_{i\lambda}}$ where $\{\ket{\fii_{i\lambda}}\}_i$ and $\{\ket{\psi_{i\lambda}}\}$ are orthogonal sets. We can then define the vectors $\ket{\eta_{\lambda}}:=\sum_{i=1}^n \ket{\fii_{i \lambda}}\otimes \ket{\psi_{i\lambda}}$, so that $\kb{\tilde{\eta}_{\lambda}}{\tilde{\eta}_{\lambda}}:=\frac{\kb{\eta_{\lambda}}{\eta_{\lambda}}}{\tr{\kb{\eta_{\lambda}}{\eta_{\lambda}}}}$ is a pure state with Schmidt rank at most $n$. Finally defining a new probability measure by $\nu(X):=\int_{X} \tr{\kb{\eta_{\lambda}}{\eta_{\lambda}}} \, d\mu(\lambda)$ we find by direct calculation that 
\begin{align}
    \sigma_x(X_x)=\sigma^{1/2} M_x(X_x) \sigma^{1/2}=\int_{\Omega'} \mathrm{tr}_A[(N_{x,\lambda}(X_x))^{{\rm T}_{\lambda}} \otimes \id) \kb{\tilde{\eta}_{\lambda}}{\tilde{\eta}_{\lambda}}] \, d\nu(\lambda).
\end{align}
Therefore $\{\sigma_x\}_x$ is $n$-preparable. \par 
This equivalence further demonstrates the consistency of the general definitions of simulability and $n$-preparability. Furthermore, Proposition \ref{thmsimulabilitydiscrete} follows now from this equivalence and the following consistency result.
\begin{proposition}\label{thmnprepconsistency}
    Let $\{\sigma_{a|x}\}_{a,x}$ be a discrete state assemblage defined in a finite dimensional Hilbert space and with a fixed finite amount of outcomes $a$ and settings $x$ . If $\{\sigma_{a|x}\}_{a,x}$ is $n$-preparable in the sense of Definition \ref{defpreparability}, then it is also $n$-preparable according to the definition in \cite{designolle21}. In this case the decomposition (\ref{eqdefprep}) can be given with a discrete probability measure: $\sigma_{a|x}=\sum_{\lambda}p_{\lambda}\sigma_{a|x}^{\lambda}$ for all $a$ and $x$. \\
    Conversely, if $\{\sigma_{a|x}\}_{a,x}$ is $n$-simulable in the original definition, then it is also $n$-simulable according to Definition \ref{defpreparability}.
\end{proposition}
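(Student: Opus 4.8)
The plan is to recognise the assemblages that are $n$-preparable in the original sense of \cite{designolle21} as a \emph{closed convex} subset of a finite-dimensional real vector space, and then to use the elementary fact that the barycenter of a probability measure supported on such a set again lies in it. I would work in the finite-dimensional space $\mathcal{V}$ of all families $\{\sigma_{a|x}\}_{a,x}$ of Hermitian operators on $\C^d$ with the fixed finite outcome and setting ranges, and let $\mathcal{C}_n\subseteq\mathcal{V}$ denote the assemblages admitting a representation $\sigma_{a|x}=\mathrm{tr}_A[(A_{a|x}\otimes\id)\rho]$ with $\mathrm{SN}(\rho)\le n$.

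First I would verify that $\mathcal{C}_n$ is convex through the usual block construction: given preparations $(\rho^{(i)},\{A^{(i)}_{a|x}\})$ on $\hi_A^{(i)}\otimes\C^d$ for $i=1,2$, the state $p\rho^{(1)}\oplus(1-p)\rho^{(2)}$ on $(\hi_A^{(1)}\oplus\hi_A^{(2)})\otimes\C^d$ with the block POVM $A^{(1)}_{a|x}\oplus A^{(2)}_{a|x}$ prepares $p\sigma^{(1)}+(1-p)\sigma^{(2)}$, and since a Schmidt-rank-$\le n$ vector keeps its rank under the embedding of one block into the larger ancilla, the mixture still has $\mathrm{SN}\le n$. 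Next I would establish that $\mathcal{C}_n$ is closed. Because every $\mathrm{SN}\le n$ state is a convex combination of pure states of Schmidt rank $\le n$, a Carathéodory-type count inside $\mathcal{V}$ lets me bound $\dim\hi_A$ by a constant depending only on $d$, $n$ and the number of effects; over a fixed such finite ancilla the $\mathrm{SN}\le n$ states and the POVMs range over compact sets and $(\rho,\{A_{a|x}\})\mapsto\{\mathrm{tr}_A[(A_{a|x}\otimes\id)\rho]\}$ is continuous, so $\mathcal{C}_n$ is the continuous image of a compact set, hence compact.

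With $\mathcal{C}_n$ compact and convex the forward direction is immediate. A Definition \ref{defpreparability} witness $\sigma_{a|x}=\int_{\Omega'}\sigma^{\lambda}_{a|x}\,d\mu(\lambda)$ has each $\sigma^{\lambda}$ in $\mathcal{C}_n$, since in the discrete finite-dimensional setting ``preparable with an $\mathrm{SN}\le n$ state'' is exactly membership in $\mathcal{C}_n$; the image measure of $\mu$ under the weakly measurable map $\lambda\mapsto\sigma^{\lambda}$ is then a Borel probability measure on $\mathcal{C}_n$ whose barycenter equals $\sigma$, so $\sigma\in\mathcal{C}_n$ and $\{\sigma_{a|x}\}$ is $n$-preparable in the original sense. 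Carathéodory's theorem applied to $\sigma\in\mathcal{C}_n\subseteq\mathcal{V}$ then expresses $\sigma$ as a finite convex combination of points of $\mathcal{C}_n$, yielding the promised discrete decomposition $\sigma_{a|x}=\sum_{\lambda}p_{\lambda}\sigma^{\lambda}_{a|x}$. The converse inclusion is trivial: an original preparation is the degenerate instance of Definition \ref{defpreparability} with $\Omega'$ a single point and $\mu$ a point mass. Finally, Proposition \ref{thmsimulabilitydiscrete} is recovered by chaining this equivalence with the simulability--preparability equivalence proved just above and the finite-dimensional equivalence of \cite{jones2022equivalence}, applied to the assemblage $\sigma_{a|x}=\sigma^{1/2}M_{a|x}\sigma^{1/2}$ for a fixed full-rank $\sigma$.

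I expect the only genuine obstacle to be the closedness of $\mathcal{C}_n$, and within it the uniform bound on $\dim\hi_A$: one must check that permitting an arbitrary (a priori even infinite-dimensional) ancilla does not enlarge the set of reachable assemblages beyond what a fixed finite-dimensional ancilla produces. This is exactly where the decomposition of an $\mathrm{SN}\le n$ state into bounded-Schmidt-rank pure states, together with the known closedness of the Schmidt-number-$\le n$ set in finite dimensions, does the work; once $\mathcal{C}_n$ is compact the barycenter and Carathéodory steps are routine.
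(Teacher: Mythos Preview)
Your proposal is correct and follows the same overall strategy as the paper: establish that the set $\mathcal{C}_n$ (the paper's $\mathcal{SA}_n$) of originally-$n$-preparable assemblages is compact and convex in the finite-dimensional ambient space, then conclude via the barycenter property that the integral decomposition of Definition~\ref{defpreparability} forces $\sigma\in\mathcal{C}_n$; the converse is trivial in both treatments.

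The difference lies in how compactness is obtained. The paper proves it by a direct sequential argument (its Lemma~2): given a convergent sequence of assemblages each prepared by a pure Schmidt-rank-$\le n$ state, it uses unitaries on the ancilla to rotate every preparing state into a common $n$-dimensional subspace $\hi_{A0}$, extracts a convergent subsequence of states there, and reconstructs the limiting POVM from the limiting reduced state. Your route is instead to bound $\dim\hi_A$ once and for all via a pure-state decomposition of $\rho$ followed by Carath\'eodory in $\mathcal V$, after which $\mathcal{C}_n$ is the continuous image of a compact product (states $\times$ POVMs) on a fixed finite ancilla. Your argument is arguably cleaner --- it avoids the unitary-alignment step and yields the discrete decomposition directly --- but you should be slightly more explicit about the case where the initial ancilla $\hi_A$ is infinite-dimensional: there the decomposition of an $\mathrm{SN}\le n$ state into Schmidt-rank-$\le n$ pure pieces is an \emph{integral} rather than a finite sum, so the reduction really proceeds by first observing that the set of assemblages preparable by a single pure Schmidt-rank-$\le n$ state (hence on $\C^n\otimes\C^d$) is compact, and then applying the barycenter property plus Carath\'eodory to that compact set. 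This works, but it is not quite the finite ``Carath\'eodory-type count'' your sketch suggests.
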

\begin{proof}
    Let us first assume that the discrete assemblage $\{\sigma_{a|x}\}_{a,x} \subseteq \li(\C^d)$ is $n$-preparable according to Definition \ref{defpreparability}. To prove that this can be realized with a discrete measure, we first prove the following lemma. 
    \begin{lemma}
        The set of $n$-preparable state assemblages with a fixed finite amount of outcomes $a$ and settings $x$ according to the definition given in \cite{designolle21} i.e.\ the set 
        \begin{align}
            \mathcal{SA}_n:=\left\{ \bigoplus_{a,x} \sigma_{a|x}  \in \bigoplus_{a,x} \li(\C^d) \, \, \bigg| \, \, \exists[\rho \in \hS(\hi_A \otimes \C^d), \, \mathrm{SN}(\rho)\leq n, \, \text{POVMs }\{N_{a|x}\}_{a}  ]: \sigma_{a|x}=\mathrm{tr}_A[(N_{a|x} \otimes \id)\rho]  \right\}
        \end{align}
        is compact in the equivalent norm topologies of $\bigoplus_{a,x} \li(\C^d) $.
    \end{lemma}
    \noindent\emph{Proof of Lemma 2.}
 Since we are dealing with finite dimensional normed spaces, it is enough to show that this set is closed and bounded. To show boundedness, let us use the trace-norm $\Vert \cdot \Vert_1$. Let then $\bigoplus_{a,x} \sigma_{a|x} \in  \mathcal{SA}_n$. Now boundedness follows from the following inequality.
    \begin{align}
        \Vert \sigma_{a|x} \Vert_1 \leq \left\Vert \sum_a \sigma_{a|x} \right\Vert_1=1
    \end{align}
    Let's then show that $\mathcal{SA}_n$ is closed. Let $\left( \bigoplus_{a,x} \sigma_{a|x,k}\right)_k$ be a convergent sequence in $\mathcal{SA}_n$ with the limit $ \bigoplus_{a,x} \sigma_{a|x}$ and $(\sigma_{a|x,k})_k$ the sequence of an arbitrary block. Obviously $ \bigoplus_{a,x} \sigma_{a|x}$ is a state assemblage, as limit preserves positivity and the non-signalling condition holds since the set of states is closed. We can assume that each $\sigma_{a|x,n}$ is prepared with a pure state $\kb{\psi_k}{\psi_k}$ with $\mathrm{SR}(\kb{\psi_k}{\psi_k})\leq n$. We now decompose these pure states in their Schmidt decomposition, 
    \begin{align}
        \kb{\psi_k}{\psi_k}=\sum_{i,j=1}^n \sqrt{p_{ki} p_{kj}} \kb{\fii_{ki} \otimes \psi_{ki}}{\fii_{kj} \otimes \psi_{kj}}.
    \end{align}
    Let us then extend the orthonormal vectors $\{\ket{\fii_{ki}}\}_{i=1}^n$ in the Schmidt decompositions to orthonormal bases $\{\ket{\fii_{ki}}\}_{i=1}^{\dim \hi_A}$ for all $k \in \N$. Define then the unitary operators $U_k \in \li(\hi_A)$, $k \in \N$ by the relations $U_k\ket{\fii_{1i}}=\ket{\fii_{ki}}$ for all $i$. Now for an arbitrary $k \in \N$ we have 
    \begin{align}
        \kb{\psi_k'}{\psi_k'}:=(U_k^* \otimes \id)\kb{\psi_k}{\psi_k}(U_k\otimes \id)=\sum_{ij=1}^n \sqrt{p_{ki} p_{kj}} \kb{\fii_{1i} \otimes \psi_{ki}}{\fii_{1j} \otimes \psi_{kj}}. \label{NPD1}
    \end{align}
    Let then $\hi_{A0} := \lin (\{\fii_{1i}\}_{i=1}^n) $. By equation (\ref{NPD1}) we can interpret that all $\kb{\psi_k'}{\psi_k'}$ are states in the finite dimensional Hilbert space  $\hi_{A0} \otimes \C^d$. Furthermore we have the following for some POVMs $N_{a|x,k}$:
    \begin{align}
        \sigma_{a|x,k}&=\mathrm{tr}_A[(N_{a|x,k} \otimes \id)\kb{\psi_k}{\psi_k}]=\mathrm{tr}_A[(U_k^*N_{a|x,k}U_k \otimes \id)(U_k^* \otimes I)\kb{\psi_k}{\psi_k}(U_k \otimes I)]\\
        &=[(U_k^*N_{a|x,k}U_k \otimes \id)(\kb{\psi_k'}{\psi_k'}].
    \end{align}
    Therefore for all $k \in \N$ we can assume the state assemblages given by $\sigma_{a|x,k}$ can be prepared with pure states defined in a common finite dimensional Hilbert space. Let us still denote the sequence of these states by $(\kb{\psi_k'}{\psi_k'})_k$. Since the set of pure states in a finite dimensional Hilbert space with Schmidt rank at most $n$ is compact in the trace norm, the sequence $(\kb{\psi_k'}{\psi_k'})_k$ has a convergent subsequence $(\kb{\psi_{k_j}'}{\psi_{k_j}'})_j$ with a limit $\kb{\psi}{\psi}$ for which still $\mathrm{SR}(\kb{\psi}{\psi})\leq n$. Also, since the partial trace is continuous, we have $\sum_{a} \sigma_{a|x} =\lim_{k \to \infty} \sum_{a}\sigma_{a|x,k}=\lim_{j\to \infty}\sum_{a}\sigma_{a|x,k_j} = \lim_{j\to \infty} \mathrm{tr_A}[\kb{\psi_{k_j}'}{\psi_{k_j}'}]=\mathrm{tr_A}[\kb{\psi'}{\psi'}]$. Denoting $\omega:=\mathrm{tr_A}[\kb{\psi'}{\psi'}]$ we finally find 
    \begin{align}
        \sigma_{a|x}=\mathrm{tr}_A[(\omega^{-1/2} \sigma_{a|x} \omega^{-1/2} \otimes \id)\kb{\psi'}{\psi'} ].
    \end{align}
    Thus $\bigoplus_{a,x} \sigma_{a|x} \in \mathcal{SA}_n$, and therefore $\mathcal{SA}_n$ is closed. Combining this with boundedness we see that $\mathcal{SA}_n$ is compact. \qed \\

    Let us then return to the proof of Proposition \ref{thmnprepconsistency}. For the $n$-preparable state assemblage $\{\sigma_{a|x}\}_{a,x}$ we now have $\sigma_{a|x}=\int_{\Omega'} \sigma_{a|x}^{\lambda} \, d\mu(\lambda)$, where $\bigoplus_{a,x} \sigma_{a|x}^{\lambda} \in \mathcal{SA}_n$. Now $\mathcal{SA}_n$ is convex \cite{uola19b} and by Lemma 2 it is compact in the trace-norm topology.  Therefore each $\bigoplus_{a,x}\sigma_{a|x}$ is a barycenter of a probability measure over a compact convex set and consequently $\bigoplus_{a,x}\sigma_{a|x} \in \mathcal{SA}_n$ \cite{Alfsen1971CompactCS}.  \\
    The converse direction is obvious. 
\end{proof}

As mentioned before, Proposition \ref{thmsimulabilitydiscrete} now follows easily from this result. As a final example we show that Example \ref{thmQPsimulable} implies also the existence of a state assemblage that can only be prepared with a state with infinite Schmidt number.

\begin{example}
Let $\sigma$ be a faithful full rank state, and define the state assemblage $\{\sigma_x\}_{x=1,2}$ with the following relations.
\begin{align}
    \sigma_1(X)&:=\sigma^{1/2}Q(X)\sigma^{1/2}, \\
    \sigma_2(X)&:=\sigma^{1/2}P(X)\sigma^{1/2}.
\end{align}
Aiming for a contradiction, assume that this state assemblage is $n$-preparable for some $n \in \N$. Using the equivalence between $n$-preparability and $n$-simulability, this means that $\{Q,P\}$ is $n$-simulable, which is a contradiction with Example \ref{thmQPsimulable}. Therefore this state assemblage cannot be $n$-preparable for any $n \in \N$.
\end{example}

\section{Conclusions}

We have presented a generalisation of a known simulation algorithm of quantum measurements to the continuous variable regime. The generalisation is based on instruments whose output space is not the full quantum algebra, hence, providing an operational task in which such general mappings are needed. We have demonstrated the need for the generalisation, given that one requires the compression protocol to be a relaxation of the central concept of joint measurability. We have further shown that the canonical pair of position and momentum is not finitely simulable.

By bridging our notion with bipartite quantum correlations, we have introduced a stronger form of quantum steering for the continuous variable setting. The resulting correlation tasks ask one to verify the Schmidt number of a continuous variable state using the steering setup. Translating the results from the measurement side, we have shown that not all unsteerable state assemblages can be prepared with a separable state, and that an analogue of the original EPR setting, in which one measures position and momentum on a full Schmidt rank shared state, is not $n$-preparable for any finite $n$.

For future research, it could be of interest to investigate what sort of a role covariance systems play in finding simulation protocols. Also, it would be interesting to develop a smoothened version of the generalised simulation protocol, in which one is not required to simulate the target measurements exactly, but would be allowed to simulate them approximately. Also, it could be of interest to investigate whether the known connections between incompatibility and an advantage in state discrimination tasks could be generalised in a meaningful way to the realm of simulability, in that the dimensionality would play a central role in the task \cite{carmeli19a,skrzypczyk19,oszmaniec19,uola19b,uola19c}.

\section{Acknowledgements}

PJ, SE, and RU are thankful for the support from  the Swiss National Science Foundation (Ambizione PZ00P2-202179). All authors are grateful for fruitful discussions with Erkka Haapasalo, Benjamin D.M. Jones, Jonathan Steinberg.

\bibliographystyle{unsrt}
\bibliography{references.bib}

\end{document}